\documentclass[reqno]{amsart}
\usepackage{
    ask,
    fullpage,
    booktabs,
    tikz-cd,
    todonotes,
    subcaption
    }
\usepackage
    [
        thmcounter=section,
        excounter=section,
    ]
    {asktheorems}

\usepackage[sorting=nyt,sortcites]{biblatex}
\usepackage{enumitem}

\numberwithin{equation}{section}

\renewcommand{\rho}{\varrho}

\DeclareMathOperator{\Proj}{Proj}

\title{On Contextuality as a Feature of Logic and Probability Theory}
\author{Ask Ellingsen}
\date{\today}

\begin{document}

\begin{abstract}
    In quantum mechanics, not everything that can be observed can be observed simultaneously. Observational data exhibits \emph{contextuality} -- a generalisation of nonlocality -- if the result of an observation is necessarily dependent on which combination of observables was measured. This article gives a mathematical introduction to contextuality, emphasising its nature as a general feature of probability theory and logic, rather than of any particular quantum theory.
\end{abstract}

\maketitle


\section{Introduction}

One hundred years after the inception of quantum mechanics (QM), there still appears to be no universal consensus on what really makes a physical system \emph{quantum} as opposed to \emph{classical}. Certainly, some physical properties are hallmarks of QM -- such as energy quantisation, uncertainty principles, and wave-particle duality -- but none of these is universally present in all ``quantum'' systems.\footnote
    {
        Indeed, not all quantum systems support notions of ``energy'', ``waves'', or ``particles'' at all. For example, where are the waves, the particles, or the energy in an (abstract) quantum bit?
    }
Conversely, there seems to be no property which is uniquely classical, in the sense that it never holds for a ``quantum'' system. Rather, quantum mechanics regularly produces predictions that are compatible with some alternative ``classical'' explanation.\footnote
    {
        Indeed, if quantum mechanics did not behave deceivingly similarly to classical mechanics ``most of the time'', then classical physics would probably never have been very useful to begin with.
    }

However, if one were to attempt to single some quintessentially quantum feature -- one of the best candidates might be the \emph{simultaneous availability of information}. Namely, in a general quantum system, \emph{not all information that may in principle be retrieved from the system may be retrieved at the same time}. If we conduct an experiment where we observe a certain feature of the system (say particle position, or spin in the $z$-direction), then there are other features (particle momentum, spin in the $x$-direction) that are \emph{unobservable} in that experiment. This is reflected in the theory -- in standard formulations of QM, no predictions are made about the outcomes of incommensurable observations (represented in the standard Hilbert space formalism by non-commuting operators).

This feature opens up an interesting possibility: The outcome of a quantum observation may in theory depend on the \emph{context} in which it is observed. By a ``context'' is meant some subset of the set of all observations that one could in principle make. The key fact is that while each \emph{individual} observation is possible, only certain \emph{combinations} of observations are possible \emph{simultaneously}.\footnote
    {
        It should be noted that unlike contextuality, the \emph{uncertainty principle} does not fundamentally require this incompatibility. Namely, the statistical variances of multiple random variables can be large even when those random variables can be simultaneously evaluated.
    }

In a classical system, there is no theoretical restriction on which contexts are possible. Therefore, we remain assured that even if we choose to only ever observe certain combinations of features in conjunction with each other, whatever information we ignore is still well-defined, whether we check in on it or not. An observation made about a classical system simply reveals information that is ``already there'', and choosing a context merely means to ignore some of this information.

By contrast, if only certain subsets of information are simultaneously accessible even in theory, it is not obvious that these subsets of information ``glue'' to a whole, predetermined truth which is \emph{independent} of the choice of which subset to access. A scenario of the second kind, where the context necessarily influences the outcome, is called \emph{contextual}. A non-contextual theory is one where a measurement can be thought of as merely revealing pre-existing information. 

Models of classical mechanics are usually non-contextual by construction. On the other hand, contextuality -- and the closely related concept of \emph{nonlocality} -- have been generally accepted features of QM (or at least of the mathematical models thereof) since at least the work of Bell, Kochen and Specker in the 1960's \autocite{bell_epr,bell_problem_1966,kochen_problem_1990}. Since then, multiple mathematical frameworks that describe some notion of contextuality have been proposed (see \autocite{budroni_kochen-specker_2022} for a fairly recent review of the current state of the art), but as of yet no one notion commands universal acclaim.

Having a readily available notion of contextuality general enough to be applicable in a multitude of scenarios seems desirable, as it would allow direct comparisons between notions of contextual behaviour in different formulations of quantum (and non-quantum) systems. Indeed, several existing notions of contextuality already apply beyond QM. Bell-type contextuality is a \emph{probabilistic} obstruction to consistent, global value assignments; and Kochen-Specker type contextuality is a \emph{logical} obstruction to the existence of any global value assignments at all.

This article is primarily intended as a review article attempting to introduce contextuality from this perspective -- as a property of collections of probability distributions rather than of QM. In particular, I present no new results of my own, but in Section \ref{sec:outlook} I outline some ideas for future work.

In Section \ref{sec:alicebob}, we introduce the concept of contextuality with a thought experiment. In Section \ref{sec:qprob}, we review how probability distributions are calculated in QM, and show how these distributions have the potential to be contextual. In Section \ref{sec:order}, we outline the theory of \emph{partial Boolean algebras}, and how these can be used to model the logical relationships between events in a contextual theory. We finish that section by giving a version of the Kochen-Specker theorem, which prohibits the existence of global value assignments for all high-dimensional enough quantum systems.

Some more modern approaches to contextuality employ the language of categories, sheaves, and topoi; and risk appearing hopelessly abstract to the uninitiated. In Section \ref{sec:outlook}, we briefly mention two of the more prominent of these approaches: The sheaf-theoretical theory of \emph{measurement scenarios} due to Abramsky and Brandenburger \autocite{abramskySheafTheoreticStructureNonLocality2011,abramsky_cohomology_2012,abramsky_contextuality_2015,abramsky_fraction_2017,abramsky_logic_2021,barbosa_continuous-variable_2022} (which quickly found applications in the theory of quantum computation, see e.g.~\autocite{raussendorf2013}); and the related \emph{Bohrification} programme, which grew out of the topos-theoretical formulation of the Kochen-Specker theorem originally proposed by Butterfield and Isham \autocite{isham_KS_I,isham_KS_II,isham_KS_III,isham_KS_IV,what_is_a_thing_2011,van_den_berg_noncommutativity_2012,berg_extending_2014,heunen_bohrification_2010,heunen_gelfand_2011}. A secondary goal of this article is to show that these formulations are well motivated by some natural modelling assumptions about contextual scenarios.
\section{Alice and Bob Receive Letters}
\label{sec:alicebob}

\begin{remark}
    This section is largely based on \autocite{abramsky_contextuality_2015}, with some slight modifications -- including making the thought experiment more ``analogue'' by phrasing it in terms of envelopes and pieces of paper; as well as emphasising the role of signalling distributions with an explicit example.
\end{remark}

To illustrate how context-dependence may appear, let us perform a simple thought experiment in the form of a game. Two players, Alice and Bob, are situated in different rooms and unable to communicate. A referee, Nate,\footnote
    {
        For \emph{Nature}.
    }
presents Alice with two envelopes, marked A0 and A1, and Bob with envelopes marked B0 and B1. Alice and Bob are each allowed to choose one, and only one, of their envelopes to open. Inside each envelope is a piece of paper marked either $0$ or $1$. This procedure is repeated many times.

For each round of the game, Alice and Bob record their choice of envelope and the number they found inside. After the experiment is over, they are allowed to compare notes. Alice and Bob's goal is to decide if Nate placed the numbers inside the envelopes before Alice and Bob made their choices, or if he tampered with them afterward.

In each round of the experiment, there are $2^4 = 16$ possible \textbf{joint outcomes}, each labelled by a quadruple $(x,y,a,b)$, where $a,b \in \{0,1\}$ represent Alice and Bob's respective choices of envelopes, and $x,y \in \{0,1\}$ the numbers they find inside. We call each envelope pair $(a,b) \in \{0,1\}^2$ a \textbf{context}, and each pair $(x,y) \in \{0,1\}^2$ a \textbf{result}.

Let $\PPP(x,y,a,b)$ be the total statistical probability of the outcome $(x,y,a,b)$, and let
\begin{equation}
    \PPP(a,b) = \sum_{x,y} \PPP(x,y,a,b)
\end{equation}
be the total probability that $a$ and $b$ were opened. We assume that $\PPP(a,b) > 0$ for all $a,b$ (no context was left unexplored). Let
\begin{equation} \label{eq:condprob}
    \PPP(x,y \mid a,b) = \frac{\PPP(x,y,a,b)}{\PPP(a,b)}
\end{equation}
be the probability of the result $(x,y)$, conditioned on the context $(a,b)$.

With this notation in place, Alice and Bob's goal is to determine whether one of the following conditions fail:
\begin{enumerate}
    \item \emph{(No tampering)} The envelopes are loaded before Alice and Bob make their choices. I.e.,~there is a well-defined \emph{hidden probability distribution} $\PPP^h(x,x',y,y')$, describing the probability that Alice's two envelopes are loaded beforehand with the numbers $x,x'$ respectively, and Bob's with $y,y'$;
    
    \item \emph{(Independence)} Alice and Bob's choices are not influenced by the preparation of the envelopes. I.e.,~the probability $\PPP(x,y,a,b)$ is given by a product
    \begin{equation} \label{eq:independence}
        \PPP(x,y,a,b) = \PPP^h|_{a,b}(x,y)\PPP(a,b),
    \end{equation}
    where
    \begin{equation}
        \PPP^h|_{0,0}(x,y) = \sum_{x',y'} \PPP^h(x,x',y,y')
    \end{equation}
    is the \emph{marginal probability} that the envelopes A0 and B0 contain the results $(x,y)$, ignoring the contents of A1 and B1 -- and similarly for all other envelope combinations $(a,b)$.
\end{enumerate}

Combining the independence assumption \eqref{eq:independence} with \eqref{eq:condprob} yields the condition
\begin{equation} \label{eq:hidden}
    \PPP(x,y \mid a,b) = \PPP^h|_{a,b}(x,y).
\end{equation}
That is, the probability to see the result $(x,y)$ \emph{conditioned} on the context $(a,b)$ is equal to the \emph{marginal} probability that envelopes $(a,b)$ were loaded with $(x,y)$ to begin with. If \emph{no} distribution $\PPP^h$ satisfying \eqref{eq:hidden} exists, then Alice and Bob can be confident that there is some correlation between their choices of envelopes and the contents therein. That is, the result is dependent on the context, or \textbf{contextual}. If \emph{some} $\PPP^h$ satisfying \eqref{eq:hidden} exists, then Alice and Bob have no evidence of tampering.

\begin{definition}
    Following \autocite{abramsky_sheaf-theoretic_2011}, with notation as above, we will refer to an array of conditionals $\PPP(x,y \mid a,b)$ as a \textbf{measurement scenario}. A probability distribution $\PPP^h : \{0,1\}^4 \to [0,1]$ satisfying \eqref{eq:hidden} will be called a \textbf{hidden global distribution} for the scenario. If a hidden distribution exists, the scenario is \textbf{non-contextual}; and otherwise \textbf{contextual}.
\end{definition}


\begin{remark}
    If a hidden distribution $\PPP^h$ satisfying \eqref{eq:hidden} exists, it is typically not unique, since a probability distribution cannot in general be reconstructed from its marginals.
\end{remark}

The 16 conditional probabilities $\PPP(x,y \mid a,b)$ can be organised into a $4\times4$ table.
These tables can be neatly visualised through the use of \emph{bundle diagrams}, first introduced in \autocite{abramsky_contextuality_2015}. To draw a bundle diagram, one draws a \emph{base graph} consisting of a vertex for each envelope A0, A1, B0, B1; and an edge connecting those vertices that belong to a common context. Over each vertex lies a \emph{fibre} consisting of two points corresponding to the results $0,1$. A joint outcome $(x,y,a,b)$ corresponds to a line connecting points of neighbouring fibres (a ``local section'' of the bundle), and the conditional probability $\PPP(x,y \mid a,b)$ can be visualised as a ``weighting'' of the corresponding line.

\begin{table}[ht]
    \caption{Probability tables for the scenarios described in Examples \ref{ex:deterministic} -- \ref{ex:hardy}.}
    \label{tab:tables}
    \begin{subtable}{0.25\textwidth}
    \centering
    \tabcolsep=.5mm
    \begin{tabular}{l|cccc}
        & $(0,0)$ & $(0,1)$ & $(1,0)$ & $(1,1)$ \\
        \midrule
        $(0,0)$ &$0$&$0$&$0$&$1$ \\
        $(0,1)$ &$0$&$0$&$0$&$1$ \\
        $(1,0)$ &$0$&$0$&$0$&$1$ \\
        $(1,1)$ &$0$&$0$&$0$&$1$ \\
    \end{tabular}
    \caption{The deterministic scenario from Example \ref{ex:deterministic}.}
    \label{tab:deterministic}
    \end{subtable}
    \hfill
    \begin{subtable}{0.3\textwidth}
    \centering
    \tabcolsep=.5mm
    \begin{tabular}{l|cccc}
        & $(0,0)$ & $(0,1)$ & $(1,0)$ & $(1,1)$ \\
        \midrule
        $(0,0)$ &$1/2$&$0$&$0$&$1/2$ \\
        $(0,1)$ &$1/2$&$0$&$0$&$1/2$ \\
        $(1,0)$ &$1/2$&$0$&$0$&$1/2$ \\
        $(1,1)$ &$1/2$&$0$&$0$&$1/2$ \\
    \end{tabular}
    \caption{The noncontextual 50/50-scenario from Example \ref{ex:50/50}.}
    \label{tab:fifty}
    \end{subtable}
    \hfill
    \begin{subtable}{0.3\textwidth}
    \centering
    \tabcolsep=.5mm
    \begin{tabular}{l|cccc}
        & $(0,0)$ & $(0,1)$ & $(1,0)$ & $(1,1)$ \\
        \midrule
        $(0,0)$ &$0$&$0$&$0$&$1$ \\
        $(0,1)$ &$1$&$0$&$0$&$0$ \\
        $(1,0)$ &$1$&$0$&$0$&$0$ \\
        $(1,1)$ &$0$&$0$&$0$&$1$ \\
    \end{tabular}
    \caption{The signalling scenario from Example \ref{ex:signal}.}
    \label{tab:signal}
    \end{subtable}
    \hfill
    \begin{subtable}{0.3\textwidth}
    \centering
    \tabcolsep=.5mm
    \begin{tabular}{l|cccc}
        & $(0,0)$ & $(0,1)$ & $(1,0)$ & $(1,1)$ \\
        \midrule
        $(0,0)$ &$1/2$&$0$&$0$&$1/2$ \\
        $(0,1)$ &$1/2$&$0$&$0$&$1/2$ \\
        $(1,0)$ &$1/2$&$0$&$0$&$1/2$ \\
        $(1,1)$ &$0$&$1/2$&$1/2$&$0$ \\
    \end{tabular}
    \caption{The Popescu-Rohrlich box from Example \ref{ex:PRbox}.}
    \label{tab:PRbox}
    \end{subtable}
    \hfill
    \begin{subtable}{0.3\textwidth}
    \centering
    \tabcolsep=.5mm
    \begin{tabular}{l|cccc}
        & $(0,0)$ & $(0,1)$ & $(1,0)$ & $(1,1)$ \\
        \midrule
        $(0,0)$ &$1/2$&$0$&$0$&$1/2$ \\
        $(0,1)$ &$3/8$&$1/8$&$1/8$&$3/8$ \\
        $(1,0)$ &$3/8$&$1/8$&$1/8$&$3/8$ \\
        $(1,1)$ &$1/8$&$3/8$&$3/8$&$1/8$ \\
    \end{tabular}
    \caption{The CHSH scenario from Example \ref{ex:CHSH}.}
    \label{tab:CHSH}
    \end{subtable}
    \hfill
    \begin{subtable}{0.3\textwidth}
    \centering
    \tabcolsep=.5mm
    \begin{tabular}{l|cccc}
        & $(0,0)$ & $(0,1)$ & $(1,0)$ & $(1,1)$ \\
        \midrule
        $(0,0)$ &$1/2$&$0$&$0$&$1/2$ \\
        $(0,1)$ &$1/2$&$0$&$0$&$1/2$ \\
        $(1,0)$ &$1/2$&$0$&$0$&$1/2$ \\
        $(1,1)$ &$1/4$&$1/4$&$1/4$&$1/4$ \\
    \end{tabular}
    \caption{The logically contextual scenario from Example \ref{ex:hardy}.}
    \label{tab:hardy}
    \end{subtable}
    \hfill
\end{table}

\begin{figure}[ht]
    \caption{Bundle diagrams corresponding to the tables in Table \ref{tab:tables}. Darker colours and thicker lines correspond to higher probabilities. Joint outcomes that are assigned zero probability are shown as dotted lines.}
    \label{fig:bundles}
    \begin{subfigure}{0.3\textwidth}
        \centering
        \begin{tikzpicture}

\coordinate (a) at (0,0);
\coordinate (b) at (2,-.5);
\coordinate (a') at (3,.5);
\coordinate (b') at (1,1);

\coordinate (a0) at (0,2);
\coordinate (b0) at (2,1.5);
\coordinate (a'0) at (3,2.5);
\coordinate (b'0) at (1,3);

\coordinate (a1) at (0,3);
\coordinate (b1) at (2,2.5);
\coordinate (a'1) at (3,3.5);
\coordinate (b'1) at (1,4);


\draw (a) -- (b) -- (a') -- (b') -- cycle;

\draw[dotted] (a) -- (a1);
\draw[dotted] (b) -- (b1);
\draw[dotted] (a') -- (a'1);
\draw[dotted] (b') -- (b'1);


\draw[fill,red] (a) node [left] {A0} circle (2pt);
\draw[fill,blue] (b) node [right] {B0} circle (2pt);
\draw[fill,red] (a') node [right] {A1} circle (2pt);
\draw[fill,blue] (b') node [left] {B1} circle (2pt);

\draw[fill] (a0) node [left] {$0$} circle (1pt);
\draw[fill] (a1) node [left] {$1$} circle (1pt);
\draw[fill] (b0) circle (1pt);
\draw[fill] (b1) circle (1pt);
\draw[fill] (a'0) circle (1pt);
\draw[fill] (a'1) circle (1pt);
\draw[fill] (b'0) circle (1pt);
\draw[fill] (b'1) circle (1pt);


\draw[dotted] (a0) -- (b0) -- (a'0) -- (b'0) -- cycle;
\draw[dotted] (a1) -- (b1) -- (a'1) -- (b'1) -- cycle;
\draw[dotted] (a0) -- (b1) -- (a'0) -- (b'1) -- cycle;
\draw[dotted] (a1) -- (b0) -- (a'1) -- (b'0) -- cycle;


\draw[very thick,DarkGreen] (a0) -- (b0) -- (a'0) -- (b'0) -- cycle;
        \end{tikzpicture}
        \caption{Bundle diagram corresponding to Table \ref{tab:deterministic}.}
        \label{fig:deterministic}
    \end{subfigure}
    \hfill
    \begin{subfigure}{0.3\textwidth}
        \centering
        \begin{tikzpicture}

\coordinate (a) at (0,0);
\coordinate (b) at (2,-.5);
\coordinate (a') at (3,.5);
\coordinate (b') at (1,1);

\coordinate (a0) at (0,2);
\coordinate (b0) at (2,1.5);
\coordinate (a'0) at (3,2.5);
\coordinate (b'0) at (1,3);

\coordinate (a1) at (0,3);
\coordinate (b1) at (2,2.5);
\coordinate (a'1) at (3,3.5);
\coordinate (b'1) at (1,4);


\draw (a) -- (b) -- (a') -- (b') -- cycle;

\draw[dotted] (a) -- (a1);
\draw[dotted] (b) -- (b1);
\draw[dotted] (a') -- (a'1);
\draw[dotted] (b') -- (b'1);


\draw[fill,red] (a) node [left] {A0} circle (2pt);
\draw[fill,blue] (b) node [right] {B0} circle (2pt);
\draw[fill,red] (a') node [right] {A1} circle (2pt);
\draw[fill,blue] (b') node [left] {B1} circle (2pt);

\draw[fill] (a0) node [left] {$0$} circle (1pt);
\draw[fill] (a1) node [left] {$1$} circle (1pt);
\draw[fill] (b0) circle (1pt);
\draw[fill] (b1) circle (1pt);
\draw[fill] (a'0) circle (1pt);
\draw[fill] (a'1) circle (1pt);
\draw[fill] (b'0) circle (1pt);
\draw[fill] (b'1) circle (1pt);


\draw[dotted] (a0) -- (b1) -- (a'0) -- (b'1) -- cycle;
\draw[dotted] (a1) -- (b0) -- (a'1) -- (b'0) -- cycle;


\draw[DarkGreen,thick] (a0) -- (b0) -- (a'0) -- (b'0) -- cycle;
\draw[DarkGreen,thick] (a1) -- (b1) -- (a'1) -- (b'1) -- cycle;
        \end{tikzpicture}
        \caption{Bundle diagram corresponding to Table \ref{tab:fifty}.}
        \label{fig:fifty}
    \end{subfigure}
    \hfill
    \begin{subfigure}{0.3\textwidth}
        \centering
        \begin{tikzpicture}

\coordinate (a) at (0,0);
\coordinate (b) at (2,-.5);
\coordinate (a') at (3,.5);
\coordinate (b') at (1,1);

\coordinate (a0) at (0,2);
\coordinate (b0) at (2,1.5);
\coordinate (a'0) at (3,2.5);
\coordinate (b'0) at (1,3);

\coordinate (a1) at (0,3);
\coordinate (b1) at (2,2.5);
\coordinate (a'1) at (3,3.5);
\coordinate (b'1) at (1,4);


\draw (a) -- (b) -- (a') -- (b') -- cycle;

\draw[dotted] (a) -- (a1);
\draw[dotted] (b) -- (b1);
\draw[dotted] (a') -- (a'1);
\draw[dotted] (b') -- (b'1);


\draw[fill,red] (a) node [left] {A0} circle (2pt);
\draw[fill,blue] (b) node [right] {B0} circle (2pt);
\draw[fill,red] (a') node [right] {A1} circle (2pt);
\draw[fill,blue] (b') node [left] {B1} circle (2pt);

\draw[fill] (a0) node [left] {$0$} circle (1pt);
\draw[fill] (a1) node [left] {$1$} circle (1pt);
\draw[fill] (b0) circle (1pt);
\draw[fill] (b1) circle (1pt);
\draw[fill] (a'0) circle (1pt);
\draw[fill] (a'1) circle (1pt);
\draw[fill] (b'0) circle (1pt);
\draw[fill] (b'1) circle (1pt);


\draw[dotted] (a0) -- (b0) -- (a'0) -- (b'0) -- cycle;
\draw[dotted] (a1) -- (b1) -- (a'1) -- (b'1) -- cycle;
\draw[dotted] (a0) -- (b1) -- (a'0) -- (b'1) -- cycle;
\draw[dotted] (a1) -- (b0) -- (a'1) -- (b'0) -- cycle;


\draw[DarkGreen,very thick] (a0) -- (b0);
\draw[DarkGreen,very thick] (a1) -- (b'1);
\draw[DarkGreen,very thick] (a'1) -- (b1);
\draw[DarkGreen,very thick] (a'0) -- (b'0);
        \end{tikzpicture}
        \caption{Bundle diagram corresponding to Table \ref{tab:signal}.}
        \label{fig:signalling}
    \end{subfigure}
    \hfill
    \begin{subfigure}{0.3\textwidth}
        \centering
        \begin{tikzpicture}

\coordinate (a) at (0,0);
\coordinate (b) at (2,-.5);
\coordinate (a') at (3,.5);
\coordinate (b') at (1,1);

\coordinate (a0) at (0,2);
\coordinate (b0) at (2,1.5);
\coordinate (a'0) at (3,2.5);
\coordinate (b'0) at (1,3);

\coordinate (a1) at (0,3);
\coordinate (b1) at (2,2.5);
\coordinate (a'1) at (3,3.5);
\coordinate (b'1) at (1,4);


\draw (a) -- (b) -- (a') -- (b') -- cycle;

\draw[dotted] (a) -- (a1);
\draw[dotted] (b) -- (b1);
\draw[dotted] (a') -- (a'1);
\draw[dotted] (b') -- (b'1);


\draw[fill,red] (a) node [left] {A0} circle (2pt);
\draw[fill,blue] (b) node [right] {B0} circle (2pt);
\draw[fill,red] (a') node [right] {A1} circle (2pt);
\draw[fill,blue] (b') node [left] {B1} circle (2pt);

\draw[fill] (a0) node [left] {$0$} circle (1pt);
\draw[fill] (a1) node [left] {$1$} circle (1pt);
\draw[fill] (b0) circle (1pt);
\draw[fill] (b1) circle (1pt);
\draw[fill] (a'0) circle (1pt);
\draw[fill] (a'1) circle (1pt);
\draw[fill] (b'0) circle (1pt);
\draw[fill] (b'1) circle (1pt);


\draw[dotted] (a0) -- (b0) -- (a'0) -- (b'0) -- cycle;
\draw[dotted] (a1) -- (b1) -- (a'1) -- (b'1) -- cycle;
\draw[dotted] (a0) -- (b1) -- (a'0) -- (b'1) -- cycle;
\draw[dotted] (a1) -- (b0) -- (a'1) -- (b'0) -- cycle;


\draw[very thick,DarkGreen] (a0) -- (b0) -- (a'0) -- (b'1) -- (a1) -- (b1) -- (a'1) -- (b'0) -- cycle;
        \end{tikzpicture}
        \caption{Bundle diagram corresponding to Table \ref{tab:PRbox}.}
        \label{fig:PRbox}
    \end{subfigure}
    \hfill
    \begin{subfigure}{0.3\textwidth}
        \centering
        \begin{tikzpicture}

\coordinate (a) at (0,0);
\coordinate (b) at (2,-.5);
\coordinate (a') at (3,.5);
\coordinate (b') at (1,1);

\coordinate (a0) at (0,2);
\coordinate (b0) at (2,1.5);
\coordinate (a'0) at (3,2.5);
\coordinate (b'0) at (1,3);

\coordinate (a1) at (0,3);
\coordinate (b1) at (2,2.5);
\coordinate (a'1) at (3,3.5);
\coordinate (b'1) at (1,4);


\draw (a) -- (b) -- (a') -- (b') -- cycle;

\draw[dotted] (a) -- (a1);
\draw[dotted] (b) -- (b1);
\draw[dotted] (a') -- (a'1);
\draw[dotted] (b') -- (b'1);


\draw[fill,red] (a) node [left] {A0} circle (2pt);
\draw[fill,blue] (b) node [right] {B0} circle (2pt);
\draw[fill,red] (a') node [right] {A1} circle (2pt);
\draw[fill,blue] (b') node [left] {B1} circle (2pt);

\draw[fill] (a0) node [left] {$0$} circle (1pt);
\draw[fill] (a1) node [left] {$1$} circle (1pt);
\draw[fill] (b0) circle (1pt);
\draw[fill] (b1) circle (1pt);
\draw[fill] (a'0) circle (1pt);
\draw[fill] (a'1) circle (1pt);
\draw[fill] (b'0) circle (1pt);
\draw[fill] (b'1) circle (1pt);


\draw[dotted] (a0) -- (b1);
\draw[dotted] (a1) -- (b0);


\draw[Green] (b0) -- (a'1) -- (b'1) -- (a0);
\draw[Green] (b1) -- (a'0) -- (b'0) -- (a1);
\draw[thick,DarkGreen!90!Green] (a0) -- (b0) -- (a'0) -- (b'1) -- (a1) -- (b1) -- (a'1) -- (b'0) -- cycle;
\draw[very thick,DarkGreen] (a0) -- (b0);
\draw[very thick,DarkGreen] (a1) -- (b1);
        \end{tikzpicture}
        \caption{Bundle diagram corresponding to Table \ref{tab:CHSH}.}
        \label{fig:CHSH}
    \end{subfigure}
    \hfill
    \begin{subfigure}{0.3\textwidth}
        \centering
        \begin{tikzpicture}

\coordinate (a) at (0,0);
\coordinate (b) at (2,-.5);
\coordinate (a') at (3,.5);
\coordinate (b') at (1,1);

\coordinate (a0) at (0,2);
\coordinate (b0) at (2,1.5);
\coordinate (a'0) at (3,2.5);
\coordinate (b'0) at (1,3);

\coordinate (a1) at (0,3);
\coordinate (b1) at (2,2.5);
\coordinate (a'1) at (3,3.5);
\coordinate (b'1) at (1,4);


\draw (a) -- (b) -- (a') -- (b') -- cycle;

\draw[dotted] (a) -- (a1);
\draw[dotted] (b) -- (b1);
\draw[dotted] (a') -- (a'1);
\draw[dotted] (b') -- (b'1);


\draw[fill,red] (a) node [left] {A0} circle (2pt);
\draw[fill,blue] (b) node [right] {B0} circle (2pt);
\draw[fill,red] (a') node [right] {A1} circle (2pt);
\draw[fill,blue] (b') node [left] {B1} circle (2pt);

\draw[fill] (a0) node [left] {$0$} circle (1pt);
\draw[fill] (a1) node [left] {$1$} circle (1pt);
\draw[fill] (b0) circle (1pt);
\draw[fill] (b1) circle (1pt);
\draw[fill] (a'0) circle (1pt);
\draw[fill] (a'1) circle (1pt);
\draw[fill] (b'0) circle (1pt);
\draw[fill] (b'1) circle (1pt);


\draw[dotted] (b'0) -- (a1) -- (b0) -- (a'1);
\draw[dotted] (b'1) -- (a0) -- (b1) -- (a'0);


\draw[very thick,DarkGreen] (b'0) -- (a0) -- (b0) -- (a'0);
\draw[very thick,DarkGreen] (b'1) -- (a1) -- (b1) -- (a'1);
\draw[Green] (a'0) -- (b'0);
\draw[Green] (a'0) -- (b'1);
\draw[Green] (a'1) -- (b'0);
\draw[Green] (a'1) -- (b'1);
        \end{tikzpicture}
        \caption{Bundle diagram corresponding to Table \ref{tab:hardy}.}
        \label{fig:hardy}
    \end{subfigure}
    \hfill
\end{figure}
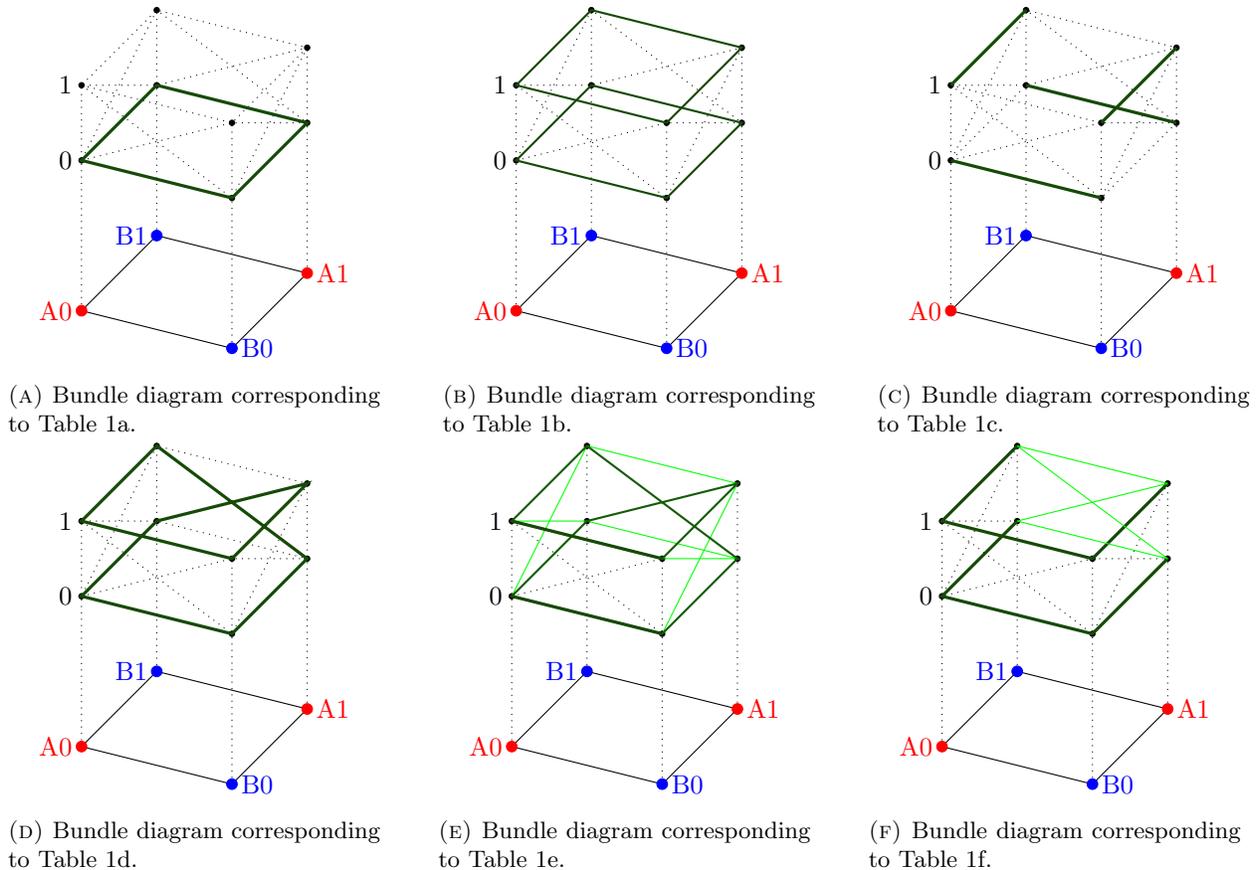

\begin{example}[Deterministic outcome] \label{ex:deterministic}
    Table \ref{tab:deterministic} (fig.~\ref{fig:deterministic}) represents a scenario where the referee always loads all envelopes with the number 1. This scenario is non-contextual.
\end{example}

\begin{example}[50/50] \label{ex:50/50}
    Table \ref{tab:fifty} (fig.~\ref{fig:fifty}) represents a scenario where the referee always loads all envelopes with the same number within a round, with a 50\% chance of choosing 0 or 1 each time. This scenario is non-contextual.
\end{example}

\begin{example}[Signalling] \label{ex:signal}
    Table \ref{tab:signal} (fig.~\ref{fig:signalling}) represents to a scenario where the referee loads Alice and Bob's envelopes with the result $1$ if they choose the context $(0,0)$ or $(1,1)$, and with $0$ if they choose $(0,1)$ or $(1,0)$. This scenario is contextual.
\end{example}

\begin{example}[Popescu-Rohrlich box] \label{ex:PRbox}
    Table \ref{tab:PRbox} (fig.~\ref{fig:PRbox}) represents a scenario where Alice and Bob always see the same result as each other (with a $50\%$ chance of each) in the contexts $(0,0)$, $(0,1)$, and $(1,0)$, but opposite results in the context $(1,1)$. This scenario is the contextual \textbf{Popescu-Rohrlich (PR) box} \autocite{popescu_1994}.
\end{example}

\begin{example}[CHSH] \label{ex:CHSH}
    Table \ref{tab:CHSH} (fig.~\ref{fig:CHSH}) represents a weakened version of the PR box from Example \ref{ex:PRbox}, where the outcomes that were possible in the PR box are the \emph{most likely}, but other outcomes are also possible. We will refer to this as is the \textbf{Clauser-Horne-Shimony-Holt (CHSH) scenario}. It is contextual.
\end{example}

We claim that Examples \ref{ex:signal} -- \ref{ex:CHSH} are all contextual, but they are so in increasingly subtle ways. In the bundle diagrams, a global assignment of results $(x,x',y,y')$ corresponds to a closed, simple loop traversing each fibre exactly once. A probability table is non-contextual precisely if the corresponding bundle diagram can be built by adding together such simple loops of constant weight \autocite{abramsky_contextuality_2015}.

Examples \ref{ex:signal} and \ref{ex:PRbox} clearly fail this condition, as they do not contain any closed loops at all. Such distributions are called \textbf{strongly contextual}. They correspond to a situation where no global assignment of values is at all compatible with the data.

In Example \ref{ex:signal}, we see that there are ``discontinuities'' in the bundle diagram. This reflects that this example is \textbf{signalling}, meaning that Alice's result allows her to deduce what envelope Bob chose to open, and vice versa. Signalling is witnessed by \emph{incompatible marginals}.

\begin{definition}
    With notation as above, a measurement scenario is \textbf{no-signalling}, or satisfies \textbf{no-disturbance} at $a$ if
    \begin{equation} \label{eq:no-signalling}
        \sum_y \PPP(x,y \mid a,0) = \sum_y \PPP(x,y \mid a,1).
    \end{equation}
    Similarly for $b$. A measurement scenario is \textbf{signalling} if the no-signalling condition fails at any $a$ or $b$.
\end{definition}

In other words, a scenario is no-signalling iff we have well-defined marginal distributions
\begin{equation}
    \PPP(x \mid a) = \sum_y \PPP(x,y \mid a,0) = \sum_y \PPP(x,y \mid a,1),
\end{equation}
and
\begin{equation}
    \PPP(y \mid b) = \sum_x \PPP(x,y \mid 0,b) = \sum_x \PPP(x,y \mid 1,b),
\end{equation}

That the presence of signalling expresses itself as a discontinuity in the bundle diagram is suggestive of the fact that this type of signalling is \emph{locally detectable}. Alice and Bob can infer information about the other's choice from their own result.

Signalling distributions are not terribly exciting, as they require Alice and Bob to be able to exchange information. What is perhaps less immediately intuitive is that there is some wiggle room between non-contextual and signalling distributions. Namely, marginalising a probability distribution loses information about correlations between events. These correlations may contain contextual information even when Alice and Bob's locally available, marginal distributions do not. The slogan is that even if the distributions are \emph{locally consistent} they may be \emph{globally inconsistent} \autocite{abramsky_contextuality_2015}. No-signalling contextual distributions can allow Alice and Bob to \emph{coordinate} without \emph{communicating}.

The PR box from Example \ref{ex:PRbox} is no-signalling -- meaning that no information is exchanged between Alice and Bob within a round; reflected in the lack of ``discontinuities'' in the bundle diagram. Nevertheless, the diagram still does not contain any closed \emph{simple} loop (only a double loop), and so is strongly contextual. To see that no local assignment can be extended to a global assignment, assume for example that A0 contains 1. This leads to the following ``liar's cycle'':
\[
    \text{\color{red} A0 contains 1} \implies \text{B0 contains 1} \implies \text{A1 contains 1} \implies \text{B1 contains 0} \implies \text{\color{red} A0 contains 0};
\]
an immediate contradiction. A similar contradiction follows from assuming that A0 contains 0, so no assignment of values to envelope A0 is consistent with \emph{any} global assignment of values to all envelopes.


If the PR box is ``qualitatively'' contextual, the CHSH scenario from Example \ref{ex:CHSH} is only \emph{quantitatively} contextual. To derive a contradiction, one can assume that envelope A0 contains 0 and calculate the conditional probability that A1 contains 0. Passing via B0 gives a conditional probability of $3/4$, while passing via B1 gives a conditional probability of only $3/8$. We say that the CHSH scenario is \textbf{probabilistically}, or \textbf{weakly contextual}.

In between strong and weak contextuality lies \textbf{logical contextuality}. In a logically (but not strongly) contextual scenario, the bundle diagram may contain \emph{some} closed simple loops, but not all edges belong to one. I.e., not every local section of the diagram is extendable to a global section.

\begin{example} \label{ex:hardy}
    Table \ref{tab:hardy} (fig.~\ref{fig:hardy}) shows a logically contextual, but not strongly contextual scenario. Namely, there \emph{exist} global sections, corresponding to the scenarios that all envelopes contain the same number (these are the two closed simple loops in fig.~\ref{fig:hardy}). But in the context $(A1,B1)$, the outcomes where Alice and Bob's envelopes contain different numbers (the ``crossbars'' in fig.~\ref{ex:hardy}) do not belong to any global section. Following the bundle diagram clockwise, we have the sequence of implications
    \[
        \text{A1 contains 0} \implies \text{B0 contains 0} \implies \text{A0 contains 0} \implies \text{B1 contains 0}.
    \]
    Similarly, if A1 contains 1, then B1 must also contain 1. Nevertheless, in the context $(A1,B1)$, half the time that A1 contains a 0, B1 contains a 1, and vice versa.
\end{example}

\begin{remark}
    The three forms of contextuality \emph{strong}, \emph{logical}, and \emph{weak} are strictly ordered by strength:
    \[
        \text{Strong contextuality} \implies \text{Logical contextuality} \implies \text{Weak contextuality}.
    \]
    However, \emph{signalling} is a somewhat independent property. A signalling distribution is always weakly contextual, but may or may not be strongly or logically contextual.
\end{remark}

The bundle diagrams from Fig.~\ref{fig:bundles} suggest that contextuality is in some way a ``topological obstruction'' to the existence of global sections. This perspective is expanded upon in \autocite{abramsky_sheaf-theoretic_2011,abramsky_cohomology_2012,abramsky_contextuality_2015}.




\section{Quantum Probability}
\label{sec:qprob}

In this section we briefly review how probabilities are calculated in the Hilbert space based textbook approach to QM. The goal of this section is to illustrate how contextual collections of probability distributions may emerge from the familiar rules of quantum mechanics. For simplicity, we will only discuss pure states and sharp measurements.

\subsection{The Born rule}

One of the main predictive laws of QM is the \emph{Born rule} \autocite{born_quantum_nodate,born_statistical_1955}. In its simplest form, the Born rule says that \emph{events} $a$ are represented by \emph{projection operators} $P_a = P_a^2 = P_a^\dagger$ acting on a Hilbert space $\cH$. A (pure) \emph{state} of the system is represented by a normalised vector $\psi\in\cH$, $\norm{\psi} = 1$; and if one performs a measurement in which an event $a$ represented by a projection $P_a$ may or may not occur, then given that the system is in the state represented by $\psi$, the probability to observe $a$ is
\begin{equation} \label{eq:born}
    \PPP_\psi(a) = \inprod{\psi}{P_a\psi}.
\end{equation}

\begin{remark}
    Notice that the Born rule \eqref{eq:born} is not a mathematical identity, but a \emph{modelling assumption}, or \emph{correspondence rule}. The quantity $\PPP_\psi(a)$ on the lhs is the \emph{modelled probability} of the event (represented by) $a$ occurring given that the system is in the state (represented by) $\psi$. The rhs is a number which can be calculated.
\end{remark}

\begin{remark}
    In the interest of simplicity, we only consider \emph{projective} measurements and \emph{pure} states. Alternatively, we may simply postulate the version of the Born rule given in \eqref{eq:born} as a toy example, and study the properties of the resulting probability distributions.
\end{remark}

The Born rule readily generalises to simultaneous measurements of multiple events $a_1,\ldots,a_n$, as long as the projection operators $P_{a_i}$ \emph{commute}. Namely, then $\prod_{i=1}^n P_{a_i}$ is a new projection operator, and the probability of simultaneously observing all the events $a_i$ is given by
\begin{equation}
    \PPP_\psi(a_1,\ldots,a_n) = \inprod{\psi}{\left(\prod_{i=1}^n P_{a_i}\right)\psi}.
\end{equation}
We may think of it as though the projection $\prod_{i=1}^n P_{a_i}$ represents the event that \emph{all} the events $a_i$ individually occur.

We will call events $a,b$ corresponding to non-commutative operators $P_a,P_b$ \textbf{incommensurable}.\footnote
    {
        In the literature on contextuality, the term \emph{incommeasurable} seems to be more common.
    }
If any of the $a_i$ are incommensurable, then not only does the product operator $\prod_{i=1}^n P_{a_i}$ ambiguously depend on the indexation of the events $a_i$; it may no longer even be a projection operator, and hence does not represent an event. The Born rule thus makes no claims about the probability that incommensurable events occur simultaneously.

In QM, one often emphasises \emph{observables} over events. These two viewpoints are equivalent, and in a sense ``duals''. An \emph{observable property} of the system is postulated to be represented by a self-adjoint operator $T = T^\dagger$, defined on a dense subspace $\cD_T$ of $\cH$. Such an operator is also called an \textbf{observable}. By the spectral theorem, each observable $T$ has a unique associated projection valued measure (PVM) $P^T : \sB(\RRR) \to \Proj(\cH)$ -- where $\sB(\RRR)$ is the $\sigma$-algebra of Borel sets on $\RRR$ and $\Proj(\cH)$ is the set of projections in $\cH$ -- such that
\begin{equation}
    T = \int_{\RRR} \lambda \, \dd P^T(\lambda).
\end{equation}
If $U \subset \RRR$ is Borel measurable, then the event ``$T$ takes a value in $U$'' -- which we will denote symbolically as $(T \leadsto U)$ -- is represented by the projection $P^T(U)$. The Born rule gives the probability of this event, assuming that the system is in a state $\psi \in \cD_T$, as
\begin{equation}
    \PPP_\psi(T \leadsto U) = \inprod{\psi}{P^T(U)\psi};
\end{equation}
and the expectation value of (the property represented by) $T$ as
\begin{equation}
    \EEE_\psi(T) = \inprod{\psi}{T\psi}.
\end{equation}
In particular, the projection $P_a$ corresponding to an event $a$ is an observable, representing the ``truth value'' of $a$. The probability that $a$ occurs agrees with the expectation value of the projection:
\[
    \PPP_\psi(a) = \inprod{\psi}{P_a\psi} = \EEE_\psi(P_a).
\]


\subsection{Contexts in QM}

Since the Born rule only assigns probabilities to sets of commuting projections, it may in principle allow contextual distributions. If events are represented by projections $P_a \in \Proj(\cH)$, then \textbf{contexts} are subsets $C \subset \Proj(\cH)$ of mutually commuting projections.

Commutativity is a reflexive and symmetric, but generally non-transitive binary relation on the set of projections. We write
\begin{equation}
    P_a \odot P_b \qq{if} P_aP_b = P_bP_a.
\end{equation}
A context is a subset $C \subset \Proj(\cH)$ such that $P_a \odot P_b$ for all $P_a,P_b \in C$.

Notice that contexts have the following property:
\begin{quote}
    \textbf{Specker's principle.} If $C \subset \Proj(\cH)$ is a set such that all $P_a,P_b\in C$ are \emph{pairwise} commensurable, then they are \emph{all} commensurable.
\end{quote}
This principle, also called \textbf{coherence}, holds because commutativity is a binary relation. In higher generality, one can imagine situations where a collection of events may be pairwise commensurable without all being commensurable as a group.

We also note that the set of projections support versions of the \emph{logical connectives} \emph{and}, \emph{or}, and \emph{not} -- but only between commensurable projection operators. If $P_a \odot P_b$, then the event \emph{$a$ and $b$} is represented by the projection $P_aP_b$; the event \emph{$a$ or $b$} is represented by $P_a + P_b - P_aP_b$; and the event \emph{not $a$} is represented by $1-P_a$.

Furthermore, the projection onto the trivial subspace 0 is always assigned probability 0 by the Born rule \eqref{eq:born}, so may be taken as representing an \emph{impossible} event; and the identity projection $1: \cH \to \cH$ always has probability 1, so it represents a \emph{necessary} event (anything happens).

Projection operators are equivalent to linear subspaces, and these subspaces are partially ordered by inclusion. In terms of projections, the inclusion order of subspaces is equivalent to the order
\begin{equation}
    P_a \leq P_b \qq{if} P_aP_b = P_bP_a = P_a.
\end{equation}
Under interpretation, if $P_a \leq P_b$, then $a$ \emph{implies} $b$.

In the classic paper \autocite{birkhoff_logic_1975}, Birkhoff and von Neumann attempted to use the set of projections to provide a logical foundation for a probability theory where probabilities are calculated using the Born rule. In their approach, dubbed (Birkhoff-von Neumann) \emph{quantum logic}, the order structure of the space of projections was assigned the highest importance.

Later, Kochen and Specker introduced another approach to quantum logic using the language of \emph{partial Boolean algebras}, which placed more emphasis on the commensurability relation \autocite{kochen_problem_1990}. This will be the topic of the next section.
\section{Logic, Probability, and Order}
\label{sec:order}

As the thought experiment involving Alice and Bob in Section \ref{sec:alicebob} illustrates, contextuality is a general feature of logical relationships and probability distributions, not only of quantum mechanics. Hence it seems natural to attempt to look for a general formulation of contextuality in terms of probability distributions, rather than directly in terms of quantum mechanical notions.

In this section we will see that the standard Kolmogorovian formulation of probability theory using measure theory on $\sigma$-algebras (or Boolean algebras) of sets is not sufficiently general to support any notion of contextuality, since it assumes the existence of a \emph{sample space} of global outcomes.

\begin{remark}
    Boole himself (who did his work on probability long before the invention of set theory) actually recognised the necessity of additional assumptions to avoid Bell-like phenomena within his framework -- see Section 11 of Chapter XIX in \autocite{boole1854}. Later authors have also recognised the necessity of certain constraints to be satisfied in order for ``local'' probability distributions to ``glue'' -- see e.g.~Vorob'ev \autocite{vorobev1962}. For an interesting discussion on the history, see the introduction to \autocite{gill2022} (where the references to Boole and Vorob'ev can also be found). I thank the referee for this comment, as I was unaware of this connection between Boole's and Bell's works.
\end{remark}

Furthermore, thanks to a famous theorem due to Stone (Theorem \ref{thm:stone}), the existence of a sample space can be inferred from the order theoretical structure describing the logical relationships between events, so in order to allow contextual probability distributions we must first allow a more general model of the events themselves. At the end of this section we explore the potential of \emph{partial Boolean algebras} as one such model.

\subsection{What is probability?}

The \emph{probability} of an event is, in the most general sense of the word, simply a number indicative of some form of \emph{confidence} in the event -- either that it will occur, has occurred, or an estimate of how often the event can be expected to occur, on average. A probability \emph{distribution} assigns probabilities to multiple events.

As a starting point for a naive mathematical model of probability, we may take a set $\cA$, whose elements represent \textbf{events}. How one goes about assigning probabilities to events in $\cA$ is a matter of modelling -- it could be through statistical testing, plain guesswork, or otherwise -- but generally, a mathematical model of probability is assumed to satisfy some version of the following assumptions:
\begin{enumerate}
    \item The probability is 0 that \emph{nothing} happens, and 1 that \emph{something} happens. That is, if $\bot \in \cA$ is the \textbf{impossible event} and $\top \in \cA$ is the \textbf{necessary event}, then
    \begin{equation} \label{eq:P-normal}
        \PPP(\bot) = 0, \qquad \PPP(\top) = 1.
    \end{equation}
    \item Probability is \emph{additive} with respect to the logical connectives \emph{and} and \emph{or}:
    \begin{equation} \label{eq:P-add}
        \PPP(a \text{ or } b) + \PPP(a \text{ and } b) = \PPP(a) + \PPP(b).
    \end{equation}
\end{enumerate}


\begin{remark}
    The above axioms are not intended as mathematically well-formed statements, only as intuitive ``sketches'' motivating the definitions making up a concrete model of probability. Do note however that the second axiom implies that the conjunctions ``$a$ and $b$'' as well as ``$a$ or $b$'' must necessarily themselves exist as \emph{single events}, so that they can be assigned probabilities.
\end{remark}

\subsection{Measure theoretic probability}

In the standard Kolmogorovian, measure theory based approach to modelling (classical) probability distributions, a core assumption is the existence of a \textbf{sample space} $\Omega$ (a set). As the set of events one takes a collection $\cA \subset \sP(\Omega)$ of subsets $a \subset \Omega$ ($\sP(\Omega)$ is the \emph{power set} of $\Omega$). A point $\omega \in \Omega$ is an \textbf{outcome}. We say that $a$ \textbf{occurs} (w.r.t.~$\omega$) if $\omega \in a$. Each outcome $\omega$ thus yields an evaluation map $\delta_\omega : \cA \to \{0,1\}$, defined by
\begin{align}
    \delta_\omega(a) = \chi_a(\omega) =
    \begin{cases}
        1, &\qqr{if} \omega \in a; \\
        0, &\qqr{if} \omega \notin a.
    \end{cases}
\end{align}
In this way, each point (outcome) $\omega$ corresponds to a \emph{global} assignment of simultaneous \emph{truth values} $\delta_\omega(a) \in \{0,1\}$ to all events $a \in \cA$.

Sample spaces mimic an experimental scenario, where after each round of the experiment the outcome $\omega$ is definite, but beforehand is unpredictable. Morally at the very least, a probability distribution assigns a fundamental probability to each outcome $\omega$, and the total probability of an event is the sum of the probabilities of all outcomes that imply it. In the limit of infinitely many outcomes, this leads to the suggestive integral formula
\begin{equation} \label{eq:integral-prob}
    \PPP(a) = \int_{a} \, \dd \PPP(\omega).
\end{equation}
Furthermore, if $a,b\in\cA$ are events, then \emph{both} $a$ and $b$ occur (w.r.t.~$\omega$) iff $\omega \in a \cap b$, and \emph{at least one of them} occur iff $\omega \in a \cup b$. Hence it makes sense to interpret $a \cap b$ as the event \emph{$a$ and $b$}, and $a \cup b$ as \emph{$a$ or $b$}. It is then immediate from \eqref{eq:integral-prob} that the additivity axiom
\begin{equation}
    \PPP(a \cup b) + \PPP(a \cap b) = \PPP(a) + \PPP(b)
\end{equation}
holds. We take $\bot = \varnothing$ as the impossible event (no outcome at all), and $\top = \Omega$ as the necessary event (any outcome); and normalise so that $\PPP(\Omega) = 1$.

A probability distribution constructed in this way can never be contextual, since it is constructed by summing over global truth value assignments. We may attempt to remedy this is by removing the sample space $\Omega$, and work directly with the space of events $\cA$.

\subsection{Boolean algebras}

For the normalisation and additivity axioms on a probability distribution to make sense, the event set $\cA$ needs to carry some additional structure modelling the logical relationships between events. One popular way to equip $\cA$ with such structure is by making it into a \emph{Boolean algebra}.

\begin{definition} \label{def:boolean-alg}
    Let $\cA$ be a partially ordered set (poset). We say that $\cA$ is a \textbf{lattice} if the meet $a \wedge b = \inf \{a,b\}$ and join $a \vee b = \sup \{a,b\}$ exist for all $a,b\in\cA$. The lattice $\cA$ is:
    \begin{itemize}
        \item
        \textbf{Distributive} if the meet and join satisfy the \textbf{distributive law}
        \begin{equation}
            a \wedge ( b \vee c ) = ( a \wedge b ) \vee ( a \wedge c ), \qquad a,b,c\in\cA.
        \end{equation}

        \item
        \textbf{Bounded} if there are elements $\top,\bot\in\cA$ such that
        \begin{equation}
            \bot \leq a \leq \top, \qquad a \in \cA;
        \end{equation}

        \item
        \textbf{Complemented} if it is bounded, and there for each $a \in \cA$ exists $\neg a \in \cA$ such that
        \begin{equation}
            a \wedge \neg a = \bot, \qquad a \vee \neg a = \top.
        \end{equation}
    \end{itemize}
    A \textbf{Boolean algebra} is a complemented, bounded, distributive lattice.
\end{definition}

We interpret $\top$ as the necessary event, $\bot$ as the impossible event, $a \wedge b$ as the event \emph{$a$ and $b$}, $a \vee b$ as \emph{$a$ or $b$}, and $\neg a$ as \emph{not $a$}. The partial ordering relation $\leq$ can be thought of as \emph{implication}.

\begin{remark}
    The partial order relation $\leq$ can be recovered from the binary operations $\wedge,\vee : \cA \times \cA \to \cA$, via
    \begin{equation}
        a \leq b \qq{if} a \wedge (a \vee b) = a.
    \end{equation}
    Therefore, a Boolean algebra can be equivalently defined as a set $\cA$ equipped with appropriate operations $(\wedge,\vee,\neg,\top,\bot)$.
\end{remark}


A \textbf{Boolean algebra morphism} is a function $f : \cA \to \cB$ between Boolean algebras $\cA$ and $\cB$ such that
\begin{align}
    f(a \wedge b) = f(a) \wedge f(b), && f(a \vee b) = f(a) \vee f(b), && f(\top_\cA) = \top_\cB, && f(\bot_\cA) = \bot_{\cB}.
\end{align}
A Boolean algebra morphism automatically also preserves complements: $f(\neg a) = \neg f(a)$. With these morphisms, Boolean algebras form a category $\BA$. Two Boolean algebras are \textbf{isomorphic} if they are isomorphic in this category. Explicitly, an \textbf{isomorphism} of Boolean algebras is an invertible Boolean algebra morphism $f$ whose inverse $f^{-1}$ is also a Boolean algebra morphism. If $\cA$ and $\cB$ are isomorphic we write $\cA \cong \cB$. A \textbf{Boolean subalgebra} of a Boolean algebra $\cA$ is a subset $\cS \subset \cA$ which is a Boolean algebra with the same $(\leq,\wedge,\vee,\neg,\top,\bot)$. A Boolean subalgebra is a subobject in the category theoretical sense.

\begin{example}
    The \textbf{trivial}, or \textbf{terminal} Boolean algebra is the one-point set $1 = \{0\}$, and the \textbf{initial} Boolean algebra is the two-point set $2 = \{0,1\}$, each equipped with the usual order.
\end{example}

\begin{example}
    For any set $\Omega$, the power set $\sP(\Omega)$ is a Boolean algebra under the inclusion order. For example, if $\Omega = \varnothing$, then $\sP(\Omega) \cong 1$ is the trivial Boolean algebra, and if $\Omega = 1 = \{0\}$ is a singleton, then $\sP(\Omega) \cong 2$.
\end{example}

\begin{definition}
    A \textbf{concrete} Boolean algebra is one that is isomorphic to a Boolean \emph{subalgebra} of some power set $\sP(\Omega)$.
\end{definition}

A concrete Boolean algebra is one that can be realised as a set of subsets. \emph{Stone's representation theorem} (Theorem \ref{thm:stone}) implies that \emph{all} Boolean algebras are actually of this type. This means that the existence of global outcomes does not have to be assumed, but is a consequence of the abstract order-theoretic definition.

As a matter of notation, we employ the convention that a natural number $n$ is identified with the set of its predecessors: $0 = \varnothing$ and $n = \{0,1,\ldots,n-1\}$. In particular, $1 = \{0\}$ is the one-point set, and $2 = \{0,1\}$ is the two-point set; and a truth valuation on $\cA$ is a map $\omega : \cA \to 2$. Also, if $\CC$ is a category and $X,Y\in\CC$, we write
\[
    \CC(X,Y) = \Hom_{\CC}(X,Y) = \{f : X \to Y\}
\]
for the set of $\CC$-morphisms from $X$ to $Y$.

\begin{lemma}
    If $\cA$ is a Boolean algebra, let $S(\cA) = \BA(\cA,2)$ be the set of Boolean algebra morphisms $\omega : \cA \to 2$.\footnote
        {
            Each $\omega \in S(\cA)$ is equivalent to an \emph{ultrafilter} on $\cA$ -- namely the preimage $\omega^{-1}(1)$.
        }
    The sets of the form $U_a = \{\omega \in S(\cA) \mid \omega(a) = 1\}$ form a basis for a topology on $S(\cA)$ as $a$ varies over $\cA$.
\end{lemma}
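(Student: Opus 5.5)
To prove the lemma, I will verify the two standard criteria for a collection $\mathcal{B}$ of subsets of a set $X$ to be a basis for some topology on $X$. First, $\mathcal{B}$ must cover $X$. Second, whenever $x \in B_1 \cap B_2$ with $B_1, B_2 \in \mathcal{B}$, there must be some $B_3 \in \mathcal{B}$ with $x \in B_3 \subset B_1 \cap B_2$. Here $X = S(\cA)$ and $\mathcal{B} = \{U_a \mid a \in \cA\}$. Both criteria should follow directly from the fact that each $\omega \in S(\cA)$ is a Boolean algebra morphism into $2$.

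For the covering condition, I will use the necessary event $\top$. Every $\omega \in S(\cA)$ satisfies $\omega(\top_\cA) = \top_2 = 1$ by the definition of a Boolean algebra morphism. Hence $\omega \in U_\top$ for every $\omega$, so $U_\top = S(\cA)$ and the sets $U_a$ cover $S(\cA)$. (If $S(\cA)$ is empty, as for the trivial algebra $1$, this is vacuous.)

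For the intersection condition, I will prove the stronger identity $U_a \cap U_b = U_{a \wedge b}$, so that the family is closed under finite intersections. Since $\omega$ preserves meets, $\omega(a \wedge b) = \omega(a) \wedge \omega(b)$. In $2$ the meet is the minimum, so $\omega(a)\wedge\omega(b) = 1$ if and only if $\omega(a) = 1$ and $\omega(b) = 1$. This gives $\omega \in U_{a\wedge b}$ if and only if $\omega \in U_a \cap U_b$. Taking $B_3 = U_{a\wedge b}$ then settles the second criterion.

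I do not expect a real obstacle here. The only point needing care is stating the basis criterion correctly, and noting that the morphism axioms $f(\top)=\top$ and $f(a\wedge b)=f(a)\wedge f(b)$ are exactly what is used. I will also remark that $U_\bot = \varnothing$ and $S(\cA)\setminus U_a = U_{\neg a}$, since $\omega$ preserves complements, so each basic set is in fact clopen. This is not needed for the lemma, but it foreshadows Stone's theorem.
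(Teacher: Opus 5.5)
Your proof is correct: $U_\top = S(\cA)$ gives the covering condition, and $U_a \cap U_b = U_{a \wedge b}$ gives the intersection condition (indeed closure under finite intersections), using only that each $\omega$ preserves $\top$ and meets. The paper states this lemma without proof, treating it as standard background to Stone's theorem, and your argument is exactly the standard verification one would supply.
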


We call the thus defined topology the \textbf{Stone topology} on $S(\cA)$, and $S(\cA)$ the \textbf{Stone space}, or the \textbf{Stone spectrum} of the Boolean algebra $\cA$.\footnote
    {
        We will generally opt for the term \emph{Stone spectrum} when referring to the Stone space $S(\cA)$ of a particular Boolean algebra $\cA$ (see e.g.~\autocite{berg_extending_2014,van_den_berg_noncommutativity_2012}), as the term \emph{Stone space} usually refers to any compact totally disconnected Hausdorff space.
    }
The idea is that each $\omega \in \BA(\cA,2)$ forms a \emph{point} of a topological space, which can be taken as a sample space for $\cA$. Each event $a \in \cA$ can then be identified with the set $U_a$ of outcomes that assign $a$ a value of 1. In this way we recover the standard picture of a Boolean algebra as a set of sets.

\begin{theorem}[Stone] \label{thm:stone}
    If $\cA \in \BA$ is a Boolean algebra and $S(\cA)$ is its Stone spectrum, then the set $\mathrm{Cl}(S(\cA))$ of clopen sets in $S(\cA)$ is a Boolean algebra under the inclusion order, and
    \begin{equation}
        \mathrm{Cl}(S(\cA)) \cong \cA.
    \end{equation}
    Conversely, if $X$ is any Stone space (totally disconnected Hausdorff space), its set of clopens $\mathrm{Cl}(X)$ is a Boolean algebra, and
    \begin{equation}
        S(\mathrm{Cl}(X)) \cong X,
    \end{equation}
    i.e., the Stone space $S(\mathrm{Cl}(X))$ is homeomorphic to $X$.
\end{theorem}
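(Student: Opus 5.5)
The plan is to prove the first half by showing that the map $\eta : \cA \to \mathrm{Cl}(S(\cA))$, $a \mapsto U_a$, is an isomorphism of Boolean algebras. For the second half, the plan is to show that the evaluation map $\epsilon : X \to S(\mathrm{Cl}(X))$, $x \mapsto \delta_x$ with $\delta_x(V) = \chi_V(x)$, is a homeomorphism. Throughout I will read ``Stone space'' as compact, Hausdorff and totally disconnected, as in the footnote. Compactness is genuinely needed for surjectivity of $\epsilon$, and I will use it there.

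For the first half, I would begin with the routine checks.
\begin{itemize}
\item Each $U_a$ is open by definition of the topology. Since $\omega(\neg a) = 1 - \omega(a)$, we get $U_{\neg a} = S(\cA) \setminus U_a$, so $U_a$ is also closed.
\item Since $\omega$ preserves $\wedge$, $\vee$, $\top$ and $\bot$, we get $U_{a\wedge b} = U_a \cap U_b$, $U_{a \vee b} = U_a \cup U_b$, $U_\top = S(\cA)$ and $U_\bot = \varnothing$. Hence $\eta$ is a Boolean algebra morphism into the clopens.
\end{itemize}
For injectivity it suffices to show that $a \neq b$ implies $U_a \neq U_b$. Reducing via symmetric difference, this amounts to showing that for every $c \neq \bot$ there is some $\omega$ with $\omega(c) = 1$. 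This is the \emph{ultrafilter lemma}: extend the filter ${\uparrow} c$ to a maximal proper filter by Zorn's lemma. Maximality then gives that for each $x$ exactly one of $x, \neg x$ lies in the filter, which is what makes its indicator a morphism into $2$.

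Surjectivity of $\eta$ needs compactness of $S(\cA)$. I would view $S(\cA)$ as a closed subspace of $2^{\cA}$, closed because it is cut out by the equations defining a morphism, and apply Tychonoff. The Stone topology agrees with the subspace topology, since the sets $U_a$ and their complements $U_{\neg a}$ generate it. Hausdorffness follows because distinct $\omega, \omega'$ differ on some $a$, and then $U_a$ and $U_{\neg a}$ separate them. Now let $V$ be clopen. Then $V$ is a union of basic sets $U_a$, and being closed in a compact space it is compact, so finitely many suffice. This gives $V = U_{a_1 \vee \cdots \vee a_n}$.

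For the converse, $\mathrm{Cl}(X)$ is clearly a Boolean subalgebra of $\sP(X)$, and each $\delta_x$ is a morphism into $2$. Injectivity of $\epsilon$ needs clopens to separate points. I expect this to be the main obstacle, because for compact Hausdorff spaces one must show that total disconnectedness implies zero-dimensionality. My approach would be to show that the quasi-component of $x$, namely the intersection of all clopens containing $x$, equals $\{x\}$. I would use compactness to prove that quasi-components equal components in compact Hausdorff spaces; if the paper intends ``Stone space'' to include a clopen basis, this step is immediate.

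For surjectivity of $\epsilon$, take an ultrafilter $\omega$. The family $\{V : \omega(V) = 1\}$ consists of closed sets with the finite intersection property, so by compactness it has a point $x$ in its intersection. For every clopen $V$ we have $\omega(V) = 1$ or $\omega(X \setminus V) = 1$, which forces $\omega = \delta_x$. Continuity holds because $\epsilon^{-1}(U_V) = V$. A continuous bijection from a compact space to a Hausdorff space is a homeomorphism, which completes the argument.
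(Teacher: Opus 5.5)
Your proof is correct. The paper gives no argument of its own here; it simply refers to Stone's original article, so there is no competing route to compare with. What you give is the standard modern proof:
\begin{itemize}
\item the ultrafilter lemma for injectivity of $a \mapsto U_a$;
\item Tychonoff applied to the closed subspace $S(\cA) \subseteq 2^{\cA}$, giving compactness and hence surjectivity onto the clopens;
\item for the converse, the finite intersection property, followed by the fact that a continuous bijection from a compact space to a Hausdorff space is a homeomorphism.
\end{itemize}

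Reading ``Stone space'' as including compactness is a necessary correction of the statement, not just a convenience. The theorem's parenthetical ``(totally disconnected Hausdorff space)'' omits compactness. You have shown that $S(\cB)$ is compact for \emph{every} Boolean algebra $\cB$, so the homeomorphism $S(\mathrm{Cl}(X)) \cong X$ forces $X$ to be compact. For instance, for an infinite discrete space $X$ one has $\mathrm{Cl}(X) = \sP(X)$. Then $S(\sP(X))$ is the Stone--\v{C}ech compactification $\beta X$, which is compact and therefore not homeomorphic to $X$.

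The one lemma you invoke without proof is that quasi-components equal components in compact Hausdorff spaces. It is standard:
\begin{itemize}
\item Suppose the quasi-component $Q$ of $x$ splits as $A \sqcup B$ with $A, B$ closed and $x \in A$.
\item Normality gives disjoint open sets $U \supseteq A$ and $W \supseteq B$.
\item Compactness of $X \setminus (U \cup W)$ gives a finite intersection $C$ of clopens containing $x$ with $C \subseteq U \cup W$.
\item Then $C \cap U = C \setminus W$ is a clopen set containing $x$. So $Q \subseteq U$, which forces $B = \varnothing$.
\end{itemize}
With that filled in, nothing is missing.
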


\begin{proof}
    See \autocite{stone_rep_1936}.
\end{proof}

\begin{corollary}
    Every Boolean algebra is concrete.
\end{corollary}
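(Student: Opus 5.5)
The corollary follows from Theorem \ref{thm:stone} once we write down an explicit embedding into a power set. Let $\cA$ be a Boolean algebra, take $\Omega = S(\cA) = \BA(\cA,2)$, and define
\[
    \phi : \cA \to \sP(S(\cA)), \qquad \phi(a) = U_a = \{\omega \in S(\cA) \mid \omega(a) = 1\}.
\]
The plan is to show that $\phi$ is an isomorphism of $\cA$ onto a Boolean subalgebra of $\sP(S(\cA))$, namely its image, which by Theorem \ref{thm:stone} is $\mathrm{Cl}(S(\cA))$.

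First, $\phi$ is a Boolean algebra morphism into $\sP(S(\cA))$, and this follows pointwise from each $\omega$ being a morphism into $2$:
\begin{itemize}
    \item $\omega(a\wedge b)=1$ iff $\omega(a)=\omega(b)=1$, so $U_{a\wedge b}=U_a\cap U_b$;
    \item similarly $U_{a\vee b}=U_a\cup U_b$;
    \item $U_\top = S(\cA)$ and $U_\bot=\varnothing$;
    \item since morphisms preserve complements, $U_{\neg a}=S(\cA)\setminus U_a$.
\end{itemize}
Consequently the image $\phi(\cA)$ is closed under $\cap,\cup$, complement, and contains $\varnothing$ and $S(\cA)$. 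It is therefore a Boolean subalgebra of $\sP(S(\cA))$ with the inherited operations, and $\phi:\cA\to\phi(\cA)$ is a surjective morphism.

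The only real obstacle is injectivity of $\phi$. This amounts to separating distinct elements by a point of $S(\cA)$: we need the prime/ultrafilter lemma. I would take this from Theorem \ref{thm:stone}. The isomorphism $\mathrm{Cl}(S(\cA))\cong\cA$ there is the map $a\mapsto U_a$, whose image is exactly the clopens, since each $U_a$ is clopen with complement $U_{\neg a}$, and compactness gives the converse. Its bijectivity is precisely injectivity of $\phi$.

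If one does not want to lean on the precise form of the isomorphism in Theorem \ref{thm:stone}, injectivity can be argued directly. Given $a\neq b$, we may assume without loss of generality that $a\not\leq b$, so $c = a\wedge\neg b\neq\bot$. By Zorn's lemma, extend the filter $\{x \mid x\geq c\}$ to an ultrafilter $F$. Then $\omega = \chi_F$ lies in $S(\cA)$, and $\omega(a)=1$ while $\omega(b)=0$, so $U_a\neq U_b$.

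Finally, the inverse $\phi(\cA)\to\cA$ of a bijective morphism is automatically a morphism, since operations transfer along the bijection. Hence $\cA\cong\phi(\cA)\subset\sP(S(\cA))$, and $\cA$ is concrete.
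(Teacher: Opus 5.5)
Your proof is correct and follows the same route as the paper, which obtains the corollary immediately from Theorem \ref{thm:stone}: it composes the isomorphism $\cA \cong \mathrm{Cl}(S(\cA))$ with the inclusion of the clopens, which are closed under finite unions, intersections and complements, as a Boolean subalgebra of $\sP(S(\cA))$. Your explicit identification of the map as $a \mapsto U_a$ and your direct ultrafilter argument for injectivity are valid extra detail---the latter is really the heart of Stone's theorem, and it shows that the topology plays no role in concreteness.
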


Stone's theorem says that the sample space $\Omega$ is implicit in the logical structure of the event set $\cA$. As long as $\cA$ carries a partial order making it into a Boolean algebra, we may always realise it as a Boolean algebra of subsets of some sample space $\Omega$. Moreover, the construction explicitly identifies the sample space $\Omega$ with the set of global truth value assignments, or \emph{global outcomes} $\omega : \cA \to 2$.

\begin{remark}
    We choose to focus on Boolean algebras in favour of $\sigma$-algebras as it simplifies the theory, but pause to note that a similar representation theorem is true for $\sigma$-algebras: the \emph{Loomis-Sikorski theorem} \autocite{sikorski1969boolean,tao_loomis}. We will not discuss $\sigma$-algebras in any detail here, but note that the Loomis-Sikorski theorem comes with an additional conceptual caveat: While Stone's theorem says that any Boolean algebra is a \emph{subalgebra} of a power set, the Loomis-Sikorski theorem says that any $\sigma$-algebra is a \emph{quotient algebra} of a subalgebra of a power set. In particular, if $(X,\cA,\mu)$ is a measure space, modding out by sets of $\mu$-measure 0 gives an ``abstract'' $\sigma$-algebra that generally is not realisable as a $\sigma$-algebra of sets.
\end{remark}

\subsection{Partial Boolean algebras}

Stone's theorem shows that if we assume that the event set $\cA$ is equipped with a partial order making it into a Boolean algebra, then it can always be realised as a Boolean algebra of sets. But this means that events are represented by sets of global outcomes, which precludes the possibility of contextuality.

This is perhaps not too surprising from a modelling perspective, since the motivation for the definition of a Boolean algebra was precisely to guarantee that the events \emph{$a$ and $b$}, \emph{$a$ or $b$}, and \emph{not $a$} are represented whenever $a$ and $b$ are. In a situation where the status of certain combinations of events are not simultaneously decidable, this motivation falls apart.

For this reason, one approach to contextual logic is through the use of \emph{partial} Boolean algebras, originally introduced by Kochen and Specker \autocite{kochen_problem_1990}. The following formulation is adapted from \autocite{van_den_berg_noncommutativity_2012}.

\begin{definition}
    A \textbf{partial Boolean algebra} consists of a set $\cA$ together with:
    \begin{itemize}
        \item A reflexive and symmetric binary \textbf{commensurability} relation $\odot \subseteq \cA \times \cA$;
        \item Elements $\bot,\top \in \cA$;
        \item A (total) unary operation $\neg : \cA \to \cA$;
        \item (Partial) binary operations $\wedge,\vee : \odot \to \cA$;
    \end{itemize}
    such that every set $C \subset \cA$ of pairwise commensurable elements $a \odot b$ is contained in a set $C \subset C' \subset \cA$ of pairwise commensurable elements, on which the above operations and elements define a Boolean algebra structure.\footnote
        {
            In particular, we must have $\bot,\top\in C'$, and $\neg a \in C'$ for each $a \in C'$.
        }
\end{definition}

A Boolean algebra is a partial Boolean algebra where the commensurability relation $\odot$ is total. We will therefore refer to Boolean algebras also as \emph{total} Boolean algebras. A morphism of partial Boolean algebras $\cA$ and $\cB$ is a map $f : \cA \to \cB$ such that $f(a) \odot f(b)$ whenever $a \odot b$, and which preserves the operations wherever defined. Partial Boolean algebras equipped with these morphisms form a category, which we denote $\pBA$ (following \autocite{abramsky_logic_2021}).

\begin{remark}
    While a total Boolean algebra can be equivalently defined as a partially ordered set, the same is not true for \emph{partial} Boolean algebras, since the commensurability condition $\odot$ is generally not transitive.
\end{remark}

The \textbf{contexts} of a partial Boolean algebra $\cA$ are sets of mutually commensurable events. It is part of the definition that each context is included in some total Boolean subalgebra, so we generally restrict attention to contexts that are themselves total Boolean algebras. If $C \subset \cA$ is a total Boolean subalgebra, we can construct its Stone space $S(C)$. The assignment of Stone spectra to Boolean algebras defines a contravariant functor,\footnote
    {
        That $S$ is a \emph{contravariant functor} (i.e., a \emph{presheaf}) means that every Boolean algebra morphism $f : \cA \to \cB$ corresponds to a continuous function in the opposite direction, $S(f) = f^* : S(\cB) \to S(\cA)$.
    }
which we call the \textbf{Stone presheaf}
\begin{equation}
    S : \BA^{\op} \to \Stone,
\end{equation}
where $\Stone$ is the category of Stone spaces. Restricting this presheaf to the total Boolean subalgebras $C \subset \cA$ of a given partial Boolean algebra $\cA$ yields an outcome space $S(C)$ for every context $C$, but no \emph{global} outcome space for the whole set $\cA$.

\begin{example}
    If $\cH$ is a Hilbert space, the set $\Proj(\cH)$ of projections on $\cH$ is a partial Boolean algebra, where $P_a \odot P_b$ if $P_aP_b = P_bP_a$. Meets, joins, and complements are given by
    \[
        P_a \wedge P_b = P_aP_b, \qquad P_a \vee P_b = P_a + P_b - P_aP_b, \qquad \neg P_a = 1-P_a.
    \]
    The Stone construction provides an outcome space $S(C)$ for each set of commuting projections, but not for every projection at once.
\end{example}

\begin{remark}
    The definition of a partial Boolean algebra takes commensurability to be a binary relation. This means that a partial Boolean algebra by definition satisfies Specker's principle. This suggests that while partial Boolean algebras can be useful for quantum mechanics, they cannot function as a setting for formulations of more general, \emph{incoherent} theories of contextual logic and probability theory.
\end{remark}

\subsection{The Kochen-Specker Theorem}

The problem of hidden variables in quantum mechanics asks whether observations made in quantum mechanics can be explained by a ``hidden'' theory where all quantum observables are given simultaneous, context-independent values. The Kochen-Specker theorem answers this question strongly in the negatory.\footnote
    {
        This section is largely based on the papers \autocite{van_den_berg_noncommutativity_2012,berg_extending_2014}.
    }

Given a collection of partially overlapping ``contexts'' $\sC = \{C_i\}_i$, each which is a Boolean algebra, we can build from it a \emph{partial} Boolean algebra $\cA$. Namely, we have the following theorem:

\begin{theorem}[van den Berg-Heunen]
    The category $\pBA$ is complete and co-complete.
\end{theorem}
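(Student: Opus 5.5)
I would show that $\pBA$ has all small products and equalisers, which gives completeness, and all small coproducts and coequalisers, which gives co-completeness. The limit side will be explicit. The colimit side I would do via a presentation argument, since quotients are where partiality causes trouble.

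\emph{Limits.} For a family $\{\cA_i\}_{i\in I}$, take the underlying set $\prod_i \cA_i$ with componentwise $\bot,\top,\neg$. Declare $(a_i)\odot(b_i)$ iff $a_i\odot b_i$ for every $i$, and define $\wedge,\vee$ componentwise on commensurable pairs. Now let $C$ be a pairwise commensurable set of tuples. Each projection $\pi_i(C)$ is pairwise commensurable in $\cA_i$, so it lies in a total Boolean subalgebra $C_i'$. Then $\prod_i C_i'$ is a total Boolean algebra containing $C$, and all its elements are pairwise commensurable. This checks the axiom, and the universal property is immediate.

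For a parallel pair $f,g:\cA\to\cB$, take $E=\{a\mid f(a)=g(a)\}$ with the restricted relation and operations. $E$ is closed under the operations wherever they are defined, because $f$ and $g$ preserve them. For a commensurable $C\subset E$, choose a Boolean $C'\supseteq C$ in $\cA$. Then $C'\cap E$ is a Boolean subalgebra containing $C$, since both $f$ and $g$ are Boolean morphisms on $C'$. So $E$ is a partial Boolean algebra, and it is the equaliser.

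\emph{Colimits.} Coproducts are easy. Take the disjoint union of the $\cA_i$ and identify all the $\bot$'s and all the $\top$'s. Commensurability holds only within a single summand, or when one of the two elements is $\bot$ or $\top$. Every context then lies inside one summand, so the axiom is inherited.

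Coequalisers are the main obstacle. Naively quotienting $\cB$ by the equivalence relation generated by $f(a)\sim g(a)$ can destroy the axiom. The quotient may need new commensurabilities, and then meets and joins of the newly commensurable elements must be generated. I would handle this in one of two ways.

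The first is to construct the coequaliser of $f,g:\cA\to\cB$ as the least partial Boolean algebra structure on a quotient. Iterate three steps transfinitely: identify elements, add the commensurability forced by identified elements, and freely adjoin the Boolean subalgebras generated by each newly commensurable set, modulo the relations holding in $\cB$. Stop at a fixed point, which a cardinality bound guarantees.

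The cleaner alternative is an adjoint-functor argument. $\pBA$ is complete (shown above) and locally small. It has a solution set, because any morphism out of a partial Boolean algebra factors through its image-generated sub-partial-Boolean algebra, whose cardinality is bounded by the input. Hence every small diagram's cocone functor is representable. Concretely, the colimit is the sub-object generated by the images of all the $\cB_j$ inside the product of all candidate targets of bounded cardinality, taken up to isomorphism.

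I would take the second route. The step I would verify most carefully is that the generated sub-object, the closure of a subset under the partial operations, is again a partial Boolean algebra. It must be closed under the operations on commensurable pairs, and every commensurable subset of it must extend to a Boolean subalgebra within it. This does not follow just from adding commensurable elements. It requires building the closure so that with each commensurable set it also contains the total Boolean subalgebra that set generates in the ambient algebra.
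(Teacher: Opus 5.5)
The paper gives no argument of its own here; it only points to van den Berg and Heunen. Your outline is therefore a self-contained proof, and it is essentially correct.

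\textbf{Limits.} These are computed on underlying sets exactly as you describe.

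\textbf{Colimits.} Freyd's general adjoint functor theorem, applied to $\mathrm{Cocone}(D,-)$, gives all small colimits. The solution set consists of all cocones from $D$ into algebras of cardinality at most $\kappa=\sum_j|\cB_j|+\aleph_0$. In your concrete description, index the product by these cocones up to isomorphism, not just by their targets.

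\textbf{The step you flag as delicate.} It is in fact immediate, by the same intersection trick you used for equalisers. Suppose $Z\subseteq Y$ contains $\bot,\top$ and is closed under $\neg$ and under $\wedge,\vee$ on pairs commensurable in $Y$. For a pairwise commensurable $C\subseteq Z$, pick a Boolean $C'\supseteq C$ in $Y$. Then $C'\cap Z$ is closed under the operations, hence is a Boolean subalgebra containing $C$.
\begin{itemize}
\item So the plain closure of $S$ under the partial operations, of size at most $|S|+\aleph_0$, is already a sub-partial Boolean algebra with the induced relation.
\item It automatically contains the Boolean algebra generated by each commensurable subset, since that algebra consists of iterated binary operations on commensurable pairs.
\item The same closedness gives uniqueness in your concrete description: two morphisms agreeing on the generators agree on a subset closed under the operations, hence everywhere.
\end{itemize}

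\textbf{An alternative route.} Kochen and Specker's original finitary axioms (closure and Boolean identities for pairwise commensurable triples) present $\pBA$ as the models of a Horn-type theory. This makes $\pBA$ locally finitely presentable, giving both completeness and cocompleteness at once. Your argument is the hands-on version of that.

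\textbf{One actual error.} Your explicit coproduct breaks when a summand is trivial. It is harmless once you let the adjoint functor theorem produce all colimits, coproducts included, which it does.
\begin{itemize}
\item Every $a$ lies in a Boolean subalgebra containing $\bot$ and $\top$. So $\bot=\top$ forces a partial Boolean algebra to be a singleton, and any family containing $1$ has coproduct $1$.
\item Your recipe applied to $1$ and $\{\bot,a,\neg a,\top\}$ instead gives $\{\ast,a,\neg a\}$ with $\bot=\top=\ast$. There the context $\{a\}$ lies in no Boolean subalgebra, so the axiom fails.
\item Likewise, the empty coproduct requires adjoining $\bot,\top$, giving $2$.
\end{itemize}
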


\begin{proof}
    See \autocite{van_den_berg_noncommutativity_2012}.
\end{proof}

\begin{corollary}
    If $\sC = \{C_i\}_i$ is a directed set of Boolean algebras, there exists a unique (up to isomorphism) partial Boolean algebra
    \begin{equation}
        \cA = \mathrm{colim} \, \sC.
    \end{equation}
\end{corollary}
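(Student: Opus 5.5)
The plan is to obtain the corollary directly from the van den Berg--Heunen theorem, that $\pBA$ is complete and co-complete. First I make the statement precise. A directed set of Boolean algebras $\sC = \{C_i\}_i$ means a diagram $D : I \to \BA$ indexed by a directed poset $I$. Its transition maps $C_i \to C_j$ for $i \leq j$ are Boolean algebra morphisms, typically inclusions of Boolean subalgebras.

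Next I compose $D$ with the inclusion functor $\BA \hookrightarrow \pBA$, which regards a total Boolean algebra as a partial Boolean algebra with total commensurability relation. This is well defined on morphisms. A Boolean algebra morphism preserves all operations, and it trivially preserves commensurability because in the target every pair is commensurable. The result is a small diagram $I \to \pBA$.

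Since $\pBA$ is co-complete, this diagram has a colimit $\cA = \mathrm{colim}\,\sC$ in $\pBA$. Colimits are determined by a universal property, so any two colimits are canonically isomorphic via the unique comparison morphisms. This gives existence together with uniqueness up to isomorphism, which is all the statement asserts.

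The only subtle point is interpretation. The word ``directed'' is not actually needed for existence, since co-completeness gives colimits of all small diagrams. I would nevertheless remark on what directedness buys. For a directed diagram of inclusions, one expects the colimit to be computed concretely as the union $\bigcup_i C_i$, with $a \odot b$ precisely when $a, b$ lie in a common $C_i$. Directedness makes this relation well behaved, since any finite family of elements lies in a common $C_j$.

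I would not need to verify that concrete description to prove the corollary. If I wanted to include it, the main obstacle would be checking the defining axiom of a partial Boolean algebra: every pairwise commensurable set must extend to a commensurable Boolean subalgebra. For an arbitrary (non-directed) family of overlapping contexts, pairwise commensurability need not place a set inside a single $C_i$, and the colimit may then identify or add elements. In that case I would simply defer to the abstract construction in \autocite{van_den_berg_noncommutativity_2012}.
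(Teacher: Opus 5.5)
Your proposal is correct and follows the paper's route. The paper reads the corollary off directly from the van den Berg--Heunen theorem that $\pBA$ is co-complete, and uniqueness up to isomorphism comes from the universal property of colimits. Your side remark can be sharpened: for a genuinely directed diagram of inclusions, any two elements lie in a common $C_j$, so the colimit is in fact a total Boolean algebra. The non-directed case, such as the poset of all contexts of a partial Boolean algebra in the subsequent theorem, is where co-completeness does the real work.
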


What this means is (by definition of $\cA$ being a colimit), that if we have a collection of truth valuations (Boolean algebra morphisms) $\omega_i : C_i \to 2$, one for every context $C_i \in \sC$, and if these \emph{agree on overlaps} in the sense that
\begin{equation}
    \omega_C|_{C_i\cap C_j} = \omega_{C_j}|_{C_i \cap C_j}, \qquad C_i,C_j\in\sC;
\end{equation}
then there is a unique morphism of partial Boolean algebras $\omega : \cA \to 2$ making the diagram
\begin{equation}
\begin{tikzcd}
	& \cA \\
	{C_i} & 2
	\arrow[dashed,"\omega", from=1-2, to=2-2]
	\arrow[hook, from=2-1, to=1-2]
	\arrow["{\omega_i}"', from=2-1, to=2-2]
\end{tikzcd}
\end{equation}
commute for every $i$.

Furthermore, a partial Boolean algebra is characterised precisely as the colimit of its total Boolean subalgebras (contexts).

\begin{theorem}
    If $\cA$ is a partial Boolean algebra, and $\sC(\cA)$ is the collection of total Boolean subalgebras of $\cA$, ordered by inclusion, then
    \begin{equation}
        \cA = \mathrm{colim} \, \sC(\cA).
    \end{equation}
\end{theorem}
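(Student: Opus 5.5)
We show directly that the cocone of inclusions $\iota_C : C \hookrightarrow \cA$, for $C \in \sC(\cA)$, satisfies the universal property of a colimit in $\pBA$. These maps are morphisms of partial Boolean algebras, and they are compatible: if $C \subseteq D$ then $\iota_D|_C = \iota_C$. So the inclusions form a cocone over the diagram $\sC(\cA)$, viewed as a poset category of total Boolean algebras with inclusion maps. Given another cocone $(\mathcal{B}, \{f_C : C \to \mathcal{B}\}_C)$ with $f_D|_C = f_C$ whenever $C \subseteq D$, we must produce a unique morphism $f : \cA \to \mathcal{B}$ with $f \circ \iota_C = f_C$ for all $C$.

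First we build $f$ as a function. Each $a \in \cA$ lies in some total Boolean subalgebra. The singleton $\{a\}$ is pairwise commensurable by reflexivity, so the definition of a partial Boolean algebra gives a context containing it; concretely, $\{\bot, a, \neg a, \top\}$ works. Set $f(a) = f_C(a)$ for any $C \ni a$. For well-definedness, suppose $a \in C \cap D$. The intersection $C \cap D$ is itself a total Boolean subalgebra, since operations inside it are computed in $\cA$ and it is closed under $\wedge, \vee, \neg$ and contains $\bot, \top$. Compatibility along $C \cap D \subseteq C$ and $C \cap D \subseteq D$ then gives $f_C(a) = f_{C\cap D}(a) = f_D(a)$. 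Uniqueness is immediate, because the contexts cover $\cA$ and $f$ is forced to equal $f_C$ on each $C$.

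Next we check that $f$ is a morphism of partial Boolean algebras. Let $a \odot b$. Then $\{a,b\}$ is pairwise commensurable, so by the defining axiom it lies in some total Boolean subalgebra $C'$. Since $f_{C'}$ is a $\pBA$-morphism out of a total Boolean algebra, $f(a) = f_{C'}(a)$ and $f(b) = f_{C'}(b)$ are commensurable in $\mathcal{B}$. Moreover, $f(a \wedge b) = f_{C'}(a \wedge b) = f_{C'}(a) \wedge f_{C'}(b)$, and the same holds for $\vee$. For $\neg$, $\top$ and $\bot$ we use any context containing the relevant elements.

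The one step that is not routine is securing a single context $C'$ containing both $a$ and $b$, together with the closure of intersections of contexts. The first is exactly where the definition of a partial Boolean algebra does the work: every pairwise commensurable set extends to a total Boolean subalgebra. Both ingredients should be checked carefully, including that $C'$ being a Boolean algebra "with the operations of $\cA$" means $a \wedge b$ computed in $C'$ agrees with the partial operation of $\cA$. If one insists on the diagram being directed, as in the preceding corollary, note that it need not be. The argument above uses only the general colimit property and never needs directedness.
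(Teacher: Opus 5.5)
The paper does not prove this theorem. It states it as a known result, implicitly deferring to van den Berg and Heunen, the source cited for the preceding cocompleteness theorem, so there is no in-paper argument to compare against.

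Your proposal is correct, and it is essentially the standard argument from that source. You verify directly that the inclusion cocone is universal and define the mediating map piecewise on contexts. Well-definedness comes from $C \cap D$ (or just $\{\bot, a, \neg a, \top\}$) being a total subalgebra. The morphism property comes from the axiom that any commensurable pair lies in a single total subalgebra, whose operations are the restrictions of those of $\cA$.

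Your remark that $\sC(\cA)$ is generally not directed, and that the argument never needs directedness, is also right.
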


In particular, a global truth valuation (partial Boolean algebra morphism) $\omega : \cA \to 2$ exists if and only if there is a collection of compatible local truth valuations $\omega_i : C_i \to 2$, defined on the total Boolean subalgebras of $\cA$. The Kochen-Specker theorem says that for a sufficiently high-dimensional quantum system, \emph{no} such collection of valuations can exist.

\begin{theorem}[Kochen-Specker]
    Let $\cA = \Proj(\CCC^3)$, considered as a partial Boolean algebra. If $\cB$ is a total Boolean algebra and $\omega : \cA \to \cB$ is a partial Boolean algebra morphism, then $\cB = 1$ is the trivial Boolean algebra.
\end{theorem}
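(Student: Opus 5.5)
The plan is to reduce the statement to the non-existence of a partial Boolean algebra morphism $\Proj(\CCC^3) \to 2$, and then to exhibit the classical Kochen-Specker obstruction.

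\emph{Step 1 (reduction).} Suppose $\omega : \cA \to \cB$ is a partial Boolean algebra morphism with $\cB$ nontrivial, i.e.\ $\bot \neq \top$ in $\cB$. By Stone's theorem (Theorem \ref{thm:stone}), $\cB \cong \mathrm{Cl}(S(\cB))$. Since $\cB$ is nontrivial, $S(\cB)$ is nonempty: a nontrivial Boolean algebra has an ultrafilter, by the ultrafilter lemma implicit in Stone's theorem. Pick a point $\beta \in S(\cB) = \BA(\cB,2)$. The composite $\beta\circ\omega : \cA \to 2$ is then a partial Boolean algebra morphism, because composites of $\pBA$-morphisms are $\pBA$-morphisms and total Boolean algebra morphisms are in particular $\pBA$-morphisms. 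It therefore suffices to show that no morphism $v : \Proj(\CCC^3) \to 2$ exists.

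\emph{Step 2 (translation into a colouring problem).} Consider an orthonormal basis $e_1,e_2,e_3$ of $\CCC^3$, and let $P_i$ be the rank-one projections onto the lines they span. These projections pairwise commute, satisfy $P_i \wedge P_j = 0$ for $i\neq j$, and their join is $1$. Because $v$ preserves $\wedge$, $\vee$, $\bot$ and $\top$ on this context, we get $v(P_i)\wedge v(P_j)=0$ and $v(P_1)\vee v(P_2)\vee v(P_3)=1$ in $2$. Hence exactly one of the $v(P_i)$ equals $1$. Writing $f(\ell) = v(P_\ell)$ for each one-dimensional subspace $\ell$, we obtain a map $f$ from rays of $\CCC^3$ to $\{0,1\}$ with exactly one $1$ in every orthogonal triad. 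In particular, two orthogonal rays are never both coloured $1$, since they lie in a common triad.

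\emph{Step 3 (no such colouring exists).} Restrict $f$ to the real rays, i.e.\ lines spanned by vectors of $\RRR^3 \subset \CCC^3$. Real orthogonality implies complex orthogonality, so the restriction is a $\{0,1\}$-colouring of the real projective plane with exactly one $1$ per orthogonal triad. I would then invoke a finite Kochen-Specker configuration: the original 117-vector set, or more economically Peres' 33 rays or Conway-Kochen's 31 rays, whose components lie in $\{0,\pm1,\pm\sqrt2\}$. For such a set one checks, by a finite case analysis, that no admissible colouring exists. The case analysis propagates forced values along triads, in the same spirit as the ``liar's cycle'' of Section \ref{sec:alicebob}, until two orthogonal rays are forced to $1$ or some triad is forced to be all $0$.

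Alternatively, I would use Gleason's theorem. An admissible colouring defines a frame function on $\CCC^3$ with weight $1$, so it has the form $\ell \mapsto \inprod{\psi}{P_\ell\psi}$ for some density operator. Such a function is continuous on rays, whereas a $\{0,1\}$-valued function that takes both values on the connected sphere of rays cannot be continuous. The obstacle is entirely Step 3. The finite configuration route is elementary but needs a careful explicit verification, so I expect to state the ray set and sketch the propagation argument, citing \autocite{kochen_problem_1990} for the full check.
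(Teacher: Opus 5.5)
Your proposal is correct and is essentially the argument behind the paper's proof, which simply cites \autocite{berg_extending_2014,kochen_problem_1990}: you reduce to the non-existence of a morphism $\Proj(\CCC^3)\to 2$ by composing $\omega$ with a point of $S(\cB)$, then invoke the classical uncolourability of rays in $\CCC^3$ (via a finite Kochen--Specker set or Gleason's theorem). One cosmetic slip: Gleason's theorem gives $\ell\mapsto\mathrm{tr}(\rho P_\ell)$ for a density operator $\rho$, not $\inprod{\psi}{P_\ell\psi}$, though this does not affect your continuity argument.
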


\begin{proof}
    See \autocite{berg_extending_2014,kochen_problem_1990}.
\end{proof}

\begin{corollary}
    There exists no morphism of partial Boolean algebras $\Proj(\CCC^3) \to 2$.
\end{corollary}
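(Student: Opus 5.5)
The corollary follows directly from the Kochen--Specker theorem just stated. The plan is to argue by contradiction: suppose $\omega : \Proj(\CCC^3) \to 2$ is a morphism of partial Boolean algebras. The two-point algebra $2 = \{0,1\}$ is a total Boolean algebra, since its commensurability relation is total. So $\omega$ is exactly the kind of map $\cA \to \cB$, with $\cB$ total, to which the theorem applies. The theorem then forces $2 \cong 1$, the trivial Boolean algebra.

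The remaining step is to check that this is absurd. In $2$ we have $\bot = 0 \neq 1 = \top$, whereas in the trivial algebra $1 = \{0\}$ the bottom and top elements coincide. Any isomorphism, indeed any Boolean algebra morphism, preserves $\bot$ and $\top$. Hence $2 \not\cong 1$, which is the desired contradiction.

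There is one interpretive point to address. The statement ``$\cB = 1$'' in the theorem should be read as ``$\cB$ is trivial,'' i.e.\ $\top_\cB = \bot_\cB$. Read this way, the same conclusion follows without invoking any isomorphism: since $\omega$ is a morphism, $\omega(\top) = \top_2$ and $\omega(\bot) = \bot_2$, so triviality of the codomain would give $0 = 1$ in $2$.

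There is no real obstacle here; all the content lies in the Kochen--Specker theorem, which the paper cites from the literature. If one wanted a self-contained argument instead, the approach would be to exhibit a finite Kochen--Specker configuration of orthonormal bases in $\CCC^3$. Examples are Peres' 33 rays or the original 117 vectors; each orthonormal triple generates a total Boolean subalgebra in which $\omega$ must assign $1$ to exactly one of the three rank-one projections. The hard step would then be the combinatorial verification that no such $\{0,1\}$-colouring exists.
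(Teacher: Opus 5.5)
Your proof is correct and matches how the corollary follows in the paper: $2$ is a total Boolean algebra, so the Kochen--Specker theorem forces it to be trivial, which is impossible since $\bot = 0 \neq 1 = \top$ in $2$. Your closing sketch of a self-contained argument via a finite Kochen--Specker configuration goes beyond the paper, which simply cites the theorem from the literature.
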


In other words, it is impossible to assign all events (projections) on $\Proj(\CCC^3)$ consistent truth values. As $\Proj(\CCC^3)$ is included as a subalgebra of $\Proj(\cH)$ for any higher-dimensional Hilbert space $\cH$, the problem only gets worse in higher dimension. The Kochen-Specker theorem strongly obstructs the existence of any hidden variable model for quantum mechanics.
\section{Outlook}
\label{sec:outlook}

This article can only hope to give a small glimpse into the field of contextuality. Much work has been done on the subject, and much is not yet fully understood (by anyone, and much less than that by this particular author). Let us therefore finish with a brief outlook.

We have not covered the topic of \emph{weak} contextuality beyond the illustrating example in Section \ref{sec:alicebob}. As long as we are content with using a partial Boolean algebra $\cA$ as our event set, it seems natural to generalise the notion of probability distributions from Boolean algebras to partial Boolean algebras; as a collection of probability measures $\mu_C : C \to [0,1]$ defined on total Boolean subalgebras $C$.

If $\cA$ is not embeddable into a total Boolean algebra, then there is no hope of realising the $\mu_C$ as marginals of a global distribution, as there are no global outcomes to assign probabilities in the first place. However, even if $\cA$ \emph{is} a total Boolean algebra (or is embeddable into one), it may be that the $\mu_C$ cannot be extended to a global distribution. Then the $\mu_C$ exhibit weak, but not strong contextuality.

A conceptual difference between strong and weak contextuality as loosely defined above is that if a collection of local outcomes $\omega_i : C_i \to 2$ agree on overlaps, then they can always be glued to a global truth valuation. The presence of strong contextuality, which rules out global truth valuations, hence also rules out the existence of a compatible family of \emph{local} truth valuations covering the space of events. This is a familiar scenario from topology and geometry, where local functions may always be glued to global functions if they agree on the overlaps of their domains.

However, it is possible to find local \emph{probability distributions} $\mu_i : C_i \to [0,1]$ which agree on overlaps and cover the entire space $\cA$ of events, but nevertheless do not glue to any global distribution. This is because the right notion of ``restriction'' for probability distributions is \emph{marginalisation}, which loses information about correlations. Conversely, it is possible that there is \emph{no} global set of correlations that restrict to a given set of ``local'' correlations.

In conclusion, while strong contextuality appears as the lack of a global family of compatible local sections, weak contextuality can appear as the failure of a global family of compatible local sections to glue. The slogan is that \emph{weak contextuality is the failure of the presheaf of local probability distributions to be a sheaf}.

In future work, I would like to address this idea in more detail. It is my belief that a sheaf-theoretical formulation of probability theory is not only the right arena to study contextuality in, but arguably is as much as natural and rich a framework for probability theory as it has proven to be for topology, geometry, and algebra. Possibly, much of this could already be understood by experts in point-free approaches to probability theory; but less so by the quantum mechanics community. I hope this article can be of some help in connecting these communities.

Finally, let us mention some interesting approaches to contextuality that we have not had time to discuss at any length here. As discussed in the introduction, two of the main approaches relying on the theory of sheaves and topoi to capture the idea of contextuality as ``local to global incompatibility of information'' are:
\begin{itemize}
    \item The sheaf-theoretical approach to contextuality pioneered by Abramsky and Brandenburger in \autocite{abramsky_sheaf-theoretic_2011}, de-emphasises the role of Boolean algebras, and instead studies general value assignments to sets of commensurable observables. Emphasis is placed on the potential use of cohomological methods to observe contextuality. See \autocite{abramsky_sheaf-theoretic_2011,abramsky_cohomology_2012,abramsky_contextuality_2015,abramsky_fraction_2017,barbosa_continuous-variable_2022}.

    \item The sheaf-theoretic approach of Abramsky et al was preceded, and at least in part inspired by, the topos-theoretical formulation of the Kochen-Specker theorem by Isham and Butterfield \autocite{isham_KS_I,isham_KS_II,isham_KS_III,isham_KS_IV}; from which eventually grew the ambitious programme of \emph{Bohrification}. See \autocite{heunen_bohrification_2010,doring_spectral_state,isham_topos_foundation_I,isham_topos_foundation_II,isham_KS_III,what_is_a_thing_2011}. The formulation of the Kochen-Specker theorem used in this article is based on \autocite{van_den_berg_noncommutativity_2012,berg_extending_2014}, exploiting (and contributing to) this perspective.
\end{itemize}

There is far too much ongoing research related to contextuality to list all relevant directions here, but here are at least two other honourable mentions:

\begin{itemize}
    \item Fritz and others have explored category theoretical foundations of probability theory, including the introduction of \emph{Markov categories}. The basic idea is to treat probability as a theory of stochastic processes, or ``noisy maps'', which can be modelled as arrows in an appropriate category. See \autocite{fritz_synthetic_2020,fritz_representable_2023,fritz_infinite_2020}.

    \item In the literature, one also finds the notion of \emph{Spekkens contextuality}, which is another formalisation of contextuality which can be applied also to non-quantum scenarios; and which precedes Abramsky Brandenburger contextuality by a few years. I will make no attempt to recount Spekkens' formalisation of contextuality here -- instead I refer to his quite readable original paper \autocite{spekkens2005}. It is however worth noting that despite Spekkens' definition of contextuality making no mention of sheaves, and being overall quite different in nature from Abramsky and Brandenburger's definition, Wester has shown that the two definitions coincide for a large subclass of scenarios \autocite{wester2018} (see however Example 1 of Section 6 of the above reference for an example of when they differ).
\end{itemize}

The literature relevant to contextuality is vast and difficult to overview. I hope this article can at least open a window to a small corner of it.
\section*{Acknowledgements}

I would like to thank Politecnico di Milano (Polimi) for organising the intensive period on quantum mathematics, and for inviting me to contribute to this special issue -- with special thanks to Alessandro Olgiati for his tireless organisational work. I am very grateful to Douglas Lundholm and Tore Ellingsen for thoroughly proofreading the manuscript and giving helpful suggestions, comments and corrections. I also thank Lasse Svensson, Filippa Getzner, and Ella Wiiand for additional corrections, comments, and good questions. I especially want to thank the anonymous referee for their constructive feedback, interesting remarks, and additional references that greatly improved the final version of this article. Finally, I gratefully acknowledge financial support from Polimi, from C.F.~Liljewalchs Stipendiestiftelse, and from the Center for Interdisciplinary Mathematics at Uppsala University.

\section*{Declarations and statements}

{\bf Research funding}. This research was funded in part by the Center for Interdisciplinary Mathematics (CIM) at Uppsala University. \\

{\bf Conflict of interest}. The author declares that he has no competing interests regarding the publication of this paper.\\

{\bf Availability of data}. There are no data associated with the research in this paper.\\

\printbibliography

\end{document}